
\documentclass[11pt]{article}
\usepackage[utf8]{inputenc}
\usepackage[a4paper]{geometry}
\usepackage{comment}

\usepackage{Style/en-math}
\usepackage{Style/preamble}
\usepackage{Style/numerical-semigroups}
\usepackage{tikz-cd}

\title{Distributed matrix multiplication with straggler tolerance using algebraic function fields }
\author{
  Fidalgo-D{\'i}az, Adri{\'a}n\\
  \texttt{adrian.fidalgo22@uva.es}
  \and
  Mart{\'i}nez-Pe{\~n}as, Umberto\\
  \texttt{umberto.martinez@uva.es}
}
\date{}

\addbibresource{ref.bib}


\begin{document}

\maketitle

{\let\thefootnote\relax\footnotetext{The authors are with the IMUVa-Mathematics Research Institute, University of Valladolid, Spain. \\The authors gratefully acknowledge the support from a Mar{\'i}a Zambrano contract by the University of Valladolid, Spain (Contract no. E-47-2022-0001486), and the support from MCIN/AEI/10.13039/501100011033 and the European Union NextGenerationEU/PRTR (Grant no. TED2021-130358B-I00).}}

\begin{abstract}
The problem of straggler mitigation in distributed matrix multiplication (DMM) is considered for a large number of worker nodes and a fixed small finite field. Polynomial codes and matdot codes are generalized by making use of algebraic function fields (i.e., algebraic functions over an algebraic curve) over a finite field. The construction of optimal solutions is translated to a combinatorial problem on the Weierstrass semigroups of the corresponding algebraic curves. Optimal or almost optimal solutions are provided. These have the same computational complexity per worker as classical polynomial and matdot codes, and their recovery thresholds are almost optimal in the asymptotic regime (growing number of workers and a fixed finite field). 

\textbf{Keywords:} Algebraic Geometry Codes; Algebraic Function Fields; Distributed Matrix Multiplication; Numerical Semigroups; Straggler Tolerance.

\end{abstract}



\section{Introduction}
Since the saturation of Moore's Law became a reality, parallel algorithms emerged as a solution to continue speeding up computations, reaching the point where making use of these techniques of distributed computation is a standard practice when designing a high performance architecture. The main idea of these algorithms consists on splitting up the initial problem into smaller ones that can be solved simultaneously, say by the nodes of a network. Then all the results are gathered and the solution to the original problem is given.

As this model scales by using more nodes, the straggling effect begins to ocurr. As the execution time of these algorithms is limited by the execution time of the slowest of the nodes, this induces a bottleneck, since the larger the number of nodes, the larger the expected difference between their execution times. It is necessary to mitigate this effect to avoid an inefficient implementation of these architectures when the amount of nodes grows.

In this manuscript, we study straggler mitigation in Distributed Matrix Multiplication, from now on DMM. The idea of DMM is multiplying two matrices $A$ and $B$ by considering them as block matrices and computing each multiplication of blocks separately. Matrix multiplication is of great importance since it is at the core of machine learning algorithms and signal processing.

\begin{figure}[h]
    \centering
    \includegraphics[width=8cm]{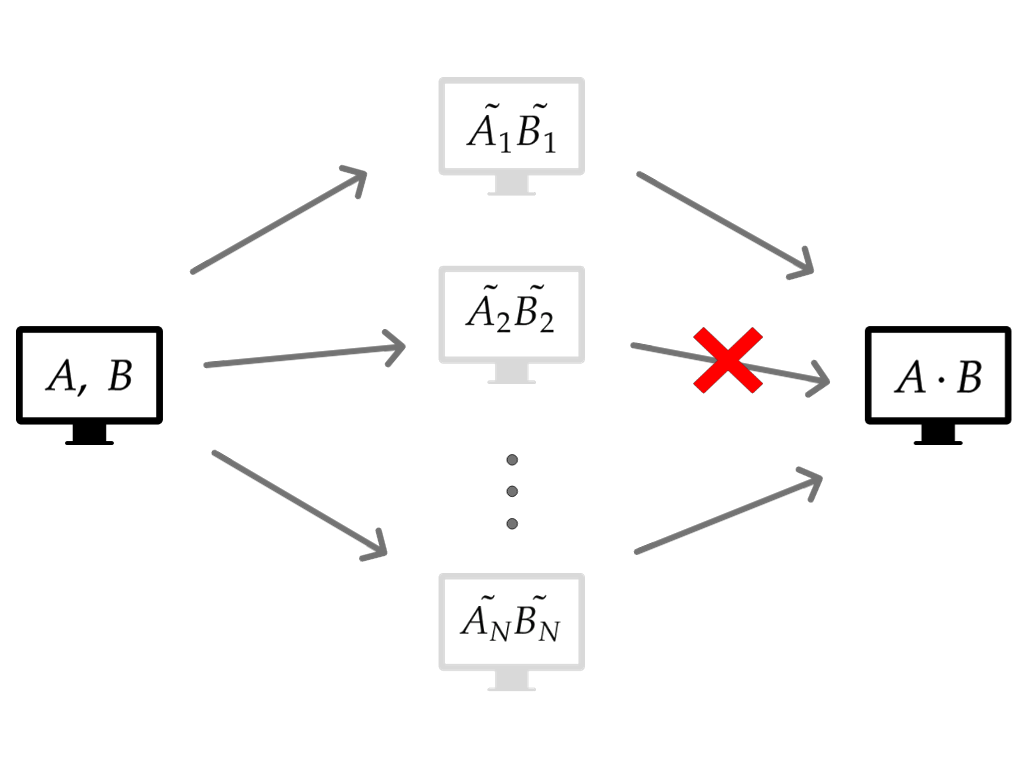}
    \caption{General scheme of DMM with straggler tolerance, where each $\tilde{A}_i$ and $\tilde{B}_i$ denotes a matrix of lower size than $A$ and $B$.}
\end{figure}

In the recent years, new ideas concerning DMM and, in general, coded computation have emerged. The initial approach was to achieve straggler mitigation by encoding one of the two operands of the operation to compute \cite{dutta2017coded}. This was improved in \cite{polynomial_codes} by employing evaluations of polynomials as Reed-Solomon codes do. Following this line, new methods came up \cite{matdot_codes, polydot_entanglement} that exploit new ways of encoding matrices in ways that allow recovering the product from fewer worker nodes. More recently, certain algebraic function fields have been proposed for secure DMM \cite{machado2023hera, makkonen2023algebraic}, extending the use of polynomials for this problem \cite{GASP_scheme}.

In this work, we generalize polynomial codes \cite{polynomial_codes} and matdot codes \cite{matdot_codes} for straggler mitigation in DMM by using algebraic function fields. While polynomial codes are optimal in terms of communication cost and computation per node, matdot codes are optimal in terms of the number of nodes needed to recover the original product (recovery threshold). However, in both cases the number of workers is upper bounded by the size of the base field of the matrices. Our generalization uses algebraic function fields to solve this problem, allowing us to define algorithms that use a large number of nodes while being able to work over small fields. This fits in well with the tendency in computation of increasing the number of nodes in parallelization to obtain faster algorithms.




\section{Preliminaries} \label{section:preliminaries}
We now introduce some preliminaries on algebraic function fields and numerical semigroups. For more details on the former, we refer the reader to \cite{stichtenoth}.

\begin{definition}
    Let $F/\mathbb{F}_q$ be an algebraic function field of trascendence degree $1$ over $\mathbb{F}_q$, the finite field with $q$ elements. Let $\mathbb{P}$ the set of places of $F/\mathbb{F}_q$. We consider a divisor $D$, i.e., a formal sum $\sum_{P \in \mathbb{P}} n_P P$ where $n_P = 0$ for almost all places. Let $f \in F$, we define the divisor
    \begin{equation*}
        (f) := \sum_{P \in \mathbb{P}} \nu_P (f) P,
    \end{equation*}
    where $ \nu_P $ denotes the valuation at $ P $. 
    For a divisor $D$, we say that $D \geq 0$ if $n_P \geq 0$ for each $P \in \mathbb{P}$. We define the Riemann-Roch space of a divisor $D$ as
    \begin{equation*}
        \mathcal{L}(D) := \{ f \in F^* \such (f) + D \geq 0\} \cup \{ 0 \}.
    \end{equation*}
\end{definition}

Let $P_1, P_2, \ldots, P_N, Q \in \mathbb{P}$ be $N+1$ distinct $ \mathbb{F}_q $-rational places\footnote{$Q$ does not need to be $ \mathbb{F}_q $-rational, but in general Weierstrass semigroups are computed over $\mathbb{F}_q$-rational places, hence we assume that $ Q $ is $ \mathbb{F}_q $-rational.}. Consider the evaluation map given by
\begin{equation*}
    \begin{split}
        ev: \mathcal{L}(kQ) &\to \mathbb{F}_q^N\\
        f &\mapsto (f(P_1), f(P_2), \ldots, f(P_N)),
    \end{split}
\end{equation*}
for $k \in \mathbb{N}$. Observe that the kernel of the evaluation map is $\ker (ev) = \mathcal{L}(kQ - \sum_{i=1}^N P_i)$. If $k < N$, then the evaluation map is injective by \cite{stichtenoth, Lemma 1.4.7} and we can identify each algebraic function $f \in \mathcal{L}(kQ)$ with its evaluations in $P_1, P_2, \ldots, P_N$. Moreover, the function may be algorithmically retrieved by performing interpolation when enough evaluations are known (see Section \ref{section:interpolating}).

\begin{notation}
    We write $\mathbb{N} := \mathbb{Z}_{\geq 0}$.
\end{notation}

\begin{definition}
    Let $Q \in \mathbb{P}$, we define
    \begin{equation*}
        \mathcal{L}(\infty Q) := \bigcup_{k \in \mathbb{N}} \mathcal{L}(kQ),
    \end{equation*}
    the set of algebraic functions that do not have poles different than $Q$. Observe that $\mathcal{L}(\infty Q)$ is a $\mathbb{F}_q$-subalgebra of $F$.
\end{definition}

\begin{definition}
    We define the Weierstrass semigroup of $Q$ as
    \begin{equation*}
        S = \{k \in \mathbb{N} \such \mathcal{L}(kQ) \neq \mathcal{L}((k-1)Q)\}.
    \end{equation*}
\end{definition}

The Weierstrass semigroup is indeed a numerical semigroup (see Definition \ref{definition:numerical_semigroup}) and its number of gaps (see Definition \ref{definition:gaps}) is exactly the genus of $F/\mathbb{F}_q$. In addition, $\mathcal{L}(\infty Q)$ is generated by $\{f \in F \such \nu_Q(f) \in S\}$ as a $\mathbb{F}_q$-vector space and so
\begin{equation*}
    \mathcal{L}(\infty Q) = \bigcup_{s \in S} \mathcal{L}(sQ).
\end{equation*}

Now we focus on numerical semigroups. See \cite{numerical_semigroups} for more details.

\begin{definition} \label{definition:numerical_semigroup}
    We say that the set $S \subseteq \mathbb{N}$ is a numerical semigroup if
    \begin{itemize}
        \item $(S, +)$ is a monoid, i.e., $0 \in S$ and $s_1 + s_2 \in S$ for every $s_1, s_2 \in S$.
        \item $\mathbb{N} \setminus{S}$ is finite.
    \end{itemize}
\end{definition}

\begin{notation}
    Let $A \subseteq \mathbb{N}$, we write $A^* := A \setminus{\{0\}}$. For integers $ m \leq n $, we denote $ [m,n] = \{ m,m+1,\ldots,n \} $ and $ [m,\infty) = \{ m,m+1,m+2,\ldots \} $.
\end{notation}

\begin{definition} \label{definition:gaps}
    Let $S$ be a numerical semigroup.
    \begin{itemize}
        \item We define its conductor as $ c(S) = \min \{ s \in S : [s,\infty) \subseteq S \} $.
        \item We define $ n(S) = |S \cap [0,c(S)-1]| $ and $ g(S) = |\mathbb{N} \setminus S| $.
        \item We define the multiplicity of $ S $ as $ m(S) = \min (S^*) $.
    \end{itemize}
\end{definition}

\section{Algebraic-geometry polynomial codes} \label{section:polynomial}

\subsection{The classical solution} \label{subsection:polynomial}

Consider $A \in \mathbb{F}_q^{r \times s}$ and $B \in \mathbb{F}_q^{s \times t}$, two matrices of sizes $r \times s$ and $s \times t$, respectively, over $\mathbb{F}_q$. We want to compute the product $A B$ in a way that is both parallelizable and straggler resistant, i.e., we do not need the output of all the worker nodes to recover the product. The following method for achieving this is called polynomial codes and was proposed in \cite{polynomial_codes}. In comparison with matdot codes, presented in Section \ref{section:matdot}, polynomial codes obtain better communication and computation costs per node but worse recovery threshold.

Start by splitting up the matrices into submatrices
\begin{equation*}
    \begin{split}
    A = \begin{pmatrix}
        A_1\\
        A_2\\
        \vdots\\
        A_m
        \end{pmatrix}
    ,\quad
    B = \begin{pmatrix}
        B_1 & B_2 & \ldots & B_n
        \end{pmatrix},
    \end{split}
\end{equation*}
with $A_i \in \mathbb{F}_q^{\frac{r}{m} \times s}$ and $B_i \in \mathbb{F}_q^{s \times \frac{t}{n}}$. Observe that, with this subdivision,
\begin{equation*}
\begin{split}
    A B =
    \begin{pmatrix}
        A_1 B_1 & A_1 B_2 & \ldots & A_1 B_n\\
        A_2 B_1 & A_2 B_2 & \ldots & A_2 B_n\\
        \vdots & \vdots & \ddots & \vdots\\
        A_m B_1 & A_m B_2 & \ldots & A_m B_n\\
    \end{pmatrix}.
\end{split}
\end{equation*}
Choose two polynomials (with matrix coefficients)
\begin{equation*}
\begin{split}
    p_A(x) := \sum_{i=1}^m A_i x^{a_i}
    ,\quad
    p_B(x) := \sum_{i=1}^n B_i x^{b_i},
\end{split}
\end{equation*}
in such way that
\begin{equation} \label{equation:polynomial_codes}
\begin{split}
    a_i + b_j \neq a_k + b_l \quad \text{if} \quad (i,j) \neq (k,l).
\end{split}
\end{equation}
Because the degrees of the monomials of $p_A$ and $p_B$ satisfy (\ref{equation:polynomial_codes}), every submatrix $A_i B_j$ present in the product $A B$ is recoverable as the coefficient of the monomial of degree $a_i + b_j$ of the polynomial $h := p_A p_B$. In a certain way, $p_A$, $p_B$ and $h$ ``encode the information'' about the matrices $A$, $B$ and $AB$, respectively.

The following algorithm computes $A B$ by using $N$ worker nodes. First, the master node chooses $N$ distinct points $x_1, x_2, \ldots, x_N \in \mathbb{F}_q$ and shares one of them with each worker node together with the polynomials $p_A$ and $p_B$. Then, the nodes compute the evaluation of $h$ in the corresponding point, say $x_i$, by using $h(x_i) = p_A(x_i) p_B(x_i)$. After a sufficient number of nodes are done with the computation of $p_A(x_i) p_B(x_i)$, enough to interpolate $h$, the master node uses the evaluations available for recovering $h$ via interpolation and, consequently, recovers $A B$. Figure \ref{figure:dmm_framework} summarizes the algorithm.

The straggler resistance comes from the fact that we only need some evaluations to recover $h$. The minimum number of workers that need to finish their executions to recover the product $A B$ (the so-called recovery threshold) is the number of evaluations needed to perform Lagrange interpolation, which is
$$ \deg(h)+1 = \deg(p_A) + \deg(p_B)+1 = \max(D_A) + \max(D_B)+1, $$
where $ D_A = \{ a_1, a_2 \ldots, a_m \} $ and $ D_B = \{ b_1, b_2, \ldots, b_n \} $. Construction \ref{construction:optimal_polynomial} shows how to choose properly the degrees of the monomials present in $p_A$ and $p_B$.

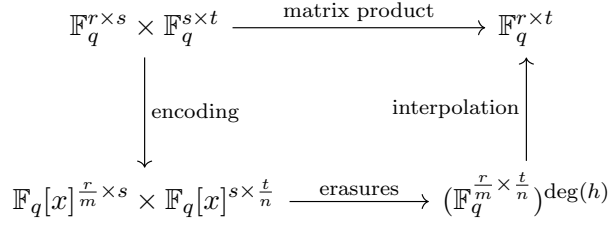
\begin{figure}
    \centering
    \begin{tikzcd}[sep=huge]
        \mathbb{F}_q^{r \times s} \times \mathbb{F}_q^{s \times t} \arrow[r, "\text{matrix product}"] \arrow[d, "\text{encoding}"] &
        \mathbb{F}_q^{r \times t} &\\
        \mathbb{F}_q[x]^{\frac{r}{m} \times s} \times  \mathbb{F}_q[x]^{s \times \frac{t}{n}} \arrow[r, "\text{erasures}"] &
        (\mathbb{F}_q^{\frac{r}{m} \times \frac{t}{n}})^{\deg(h)} \arrow[u, "\text{interpolation}"] &
    \end{tikzcd}
    \caption{Commutative diagram summarizing the polynomial code algorithm.}
    \label{figure:dmm_framework}
\end{figure}

\begin{notation}
    Let $A, B \subseteq \mathbb{N}$, we write $A + B$ to denote the Minkowski sum, i.e.,
    \begin{equation*}
        A + B := \{a + b \in \mathbb{N} \such a \in A, \quad b \in B\}.
    \end{equation*}
\end{notation}

\begin{construction}[\textbf{\cite{polynomial_codes}}] \label{construction:optimal_polynomial}
    Define the sets
    \begin{equation*}
        \begin{split}
        &D_A := \{0, 1, \ldots, m-1\},\\
        &D_B := \{0, m, \ldots, (n-1) m\}.
        \end{split}
    \end{equation*}
    These sets satisfy (\ref{equation:polynomial_codes}) so they are a valid option to define the polynomials $p_A$ and $p_B$. These sets are ``optimal'' in the sense that $h$ has the least possible degree ($\deg(h)+1 = \max(D_A) + \max(D_B)+1 = mn $\footnote{If $ A_m \neq 0 $, $ B_n \neq 0 $ and $ A_mB_n = 0 $, then $\deg(h) < mn - 1$, but this is a very rare scenario.}). This optimality is not difficult to proof since for any sets $D_A^\prime, D_B^\prime \subseteq \mathbb{N}$ of sizes $m$ and $n$ that satisfy (\ref{equation:polynomial_codes}) (equivalently $|D_A^\prime + D_B^\prime| = |D_A^\prime| |D_B^\prime|$) we have
    \begin{equation*}
        \max(D_A^\prime) + \max(D_B^\prime) +1 = \max(D_A^\prime + D_B^\prime) +1 \geq |D_A^\prime + D_B^\prime|  = mn .
    \end{equation*}
    A more general proof in terms of information theory can be found in \cite{polynomial_codes, Theorem 1}.
\end{construction}
%

\subsection{The Algebraic-Geometry solution} \label{subsection:ag_polynomial}

The main drawback of polynomial codes is that, since we need to choose distinct $ x_1, x_2, \ldots, x_N \in \mathbb{F}_q $, the number of workers must satisfy $ N \leq q $. We may circumvent this issue by replacing polynomials with algebraic functions and choosing points on an algebraic curve. The increase in computational complexity is negligible (see Section \ref{section:interpolating}) and the loss of optimality in the parameters is asymptotically small (see Section \ref{section:asymptotic}).

In polynomials codes, the product $AB$ is recovered as the coefficients of $h$ with respect to the basis formed by the monomials $x^i$. As an $\mathbb{F}_q$-algebra, the set of polynomials $\mathbb{F}_q[x]$ is generalized to $\mathcal{L}(\infty Q)$, so the polynomials $p_A$, $p_B$ and $h$ become algebraic functions.

Let $F/\mathbb{F}_q$ be an algebraic function field and $Q \in \mathbb{P}$ a place. Fix a basis of $\mathcal{L}(\infty Q)$
\begin{equation*}
    \begin{split}
        \mathcal{B} := \{f_{s_0}, f_{s_1}, \ldots\},
    \end{split}
\end{equation*}
with $f_{s_i} \in F$ and $\nu(f_{s_i}) = s_i$ the $(i+1)$th element of the Weierstrass semigroup of $Q$, denoted by $S$. Let us generalize (\ref{equation:polynomial_codes}):

\begin{definition} \label{definition:AG_poly_codes_solution}
    Given $m, n \in \mathbb{N}^*$, we say that $D_A, D_B \subseteq S$ is a solution to the AG polynomial code problem if
    \begin{itemize}
        \item The sets $D_A$ and $D_B$ are of size $m$ and $n$, respectively.
        \item $a + b \neq a^\prime + b^\prime$ for every $(a, b), (a^\prime, b^\prime) \in D_A \times D_B$ such that $(a, b) \neq (a^\prime, b^\prime)$.
    \end{itemize}
    We define the recovery threshold as $\max(D_A) + \max(D_B)+1$ and we say that the solution is optimal if the recovery threshold is minimum among all the possible solutions.
\end{definition}

\begin{remark}
    Observe that if $S = \mathbb{N}$, Definition \ref{definition:AG_poly_codes_solution} recovers (\ref{equation:polynomial_codes}).
\end{remark}

In order to proceed with the DMM algorithm, we need to tweak the basis $\mathcal{B}$. Consider a solution $D_A$ and $D_B$ to the AG polynomial code problem. Order the set $D_A \times D_B$ as
\begin{equation*}
    \begin{split}
    (a, b) \leq (a^\prime, b^\prime) \iff a + b \leq a^\prime + b^\prime.
    \end{split}
\end{equation*}
Then for each $(a, b) \in D_A \times D_B$, in ascending order, we set
\begin{equation*}
    \begin{split}
    f_{a + b} := f_a f_b.
    \end{split}
\end{equation*}

\begin{notation}
    Given $g = \sum_{s \in S} \lambda_s f_s$, we denote $\pi_{f_s}(g) = \lambda_s$, the projection of $g$ over $f_s$.
\end{notation}

Then, we define the functions $p_A \in \mathcal{L}(\infty Q)^{\frac{r}{m} \times s}$ and $p_B \in \mathcal{L}(\infty Q)^{s \times \frac{t}{n}}$ as
\begin{equation*}
    \begin{split}
    p_A := \sum_{i=1}^m A_i f_{a_i}
    ,\quad
    p_B := \sum_{i=1}^n B_i f_{b_i},
    \end{split}
\end{equation*}
together with $h := p_A p_B$. We proceed similarly as in the polynomial code scheme. Let $P_1, P_2, \ldots, P_N \in \mathbb{P}$ be $ \mathbb{F}_q $-rational places. The master node shares $p_A$, $p_B$ and a place $P_i$ with each worker. Each of them computes $h(P_i) = p_A(P_i) p_B(P_i)$ using its corresponding place. When enough workers have finished, at least $v_Q(h)+1 = \max(D_A) + \max(D_B)+1$ (we will cover the details in Section \ref{section:interpolating}), the master node gathers the evaluations and interpolates $h$, recovering each of the products $A_i B_j$ as the projection $\pi_{f_{a_i + b_j}} (h)$. Observe that this algorithm is the result of substituting $\mathbb{F}_q[x]$ for $\mathcal{L}(\infty Q)$ in Figure \ref{figure:dmm_framework}.

\begin{remark} \label{remark:ordering_DADB}
The purpose of ordering $D_A \times D_B$ is to be able to define $ f_{a+b} = f_af_b $ consistently for $ (a,b) \in D_A \times D_B $. Notice that it may happen that $a + b \in D_A$, and by ordering $D_A \times D_B$ as above, the definition of $f _{(a + b) + b^\prime} := f_{a+b} f_{b^\prime} = f_af_bf_{b^\prime}$, for $b^\prime \in D_B$, is consistent. The cost of ordering $D_A \times D_B$ can vary drastically depending on the sets $D_A$ and $D_B$. However, if $0 \in D_A \cap D_B$, ordering $D_A \times D_B$ is not needed since, in that case, we may choose $ f_0 := 1 $ and it holds $ a+b \notin D_A \cup D_B $ for any $ a \in D_A^* $ and $ b \in D_B^* $ by Definition \ref{definition:AG_poly_codes_solution}. 
%
\end{remark}

The only missing part for completing the algorithm is the explicit construction of the sets $D_A$ and $D_B$. We dedicate the remainder of the section to finding solutions to the AG polynomial code problem and giving bounds on the recovery threshold.

\begin{construction} \label{construction:AG_polynomial_trivial}
    Define the sets
    \begin{equation*}
        \begin{split}
            &D_A := c(S) + \{0, 1, \ldots, m - 1\} = \{c(S), c(S) + 1, \ldots, c(S) + m - 1\},\\
            &D_B := c(S) + \{0, m, \ldots, (n - 1) m\} = \{c(S), c(S) + m, \ldots, c(S) + (n - 1 ) m\}.
        \end{split}
    \end{equation*}
    This is a trivial solution to the AG polynomial code problem constructed from the optimal solution when $S = \mathbb{N}$, given in Construction \ref{construction:optimal_polynomial}. It yields a recovery threshold of $2 c(S) + m n $.
\end{construction}

For our next construction, we need the following definition and lemma from \cite{numerical_semigroups}.

\begin{definition}
    Let $S$ be a numerical semigroup and $n \in S^*$. We define the Ap{\'e}ry set with respect to $n$ as
    \begin{equation*}
        \Ap(S, n) := \{s \in S \such s - n \notin S\}.
    \end{equation*}
\end{definition}

\begin{lemma} \label{lemma:Kunz_coord}
    Let $S$ be a numerical semigroup and $n \in S^*$. Then
    \begin{itemize}
        \item $\Ap(S, n) = \{w_0, w_1, \ldots, w_{n-1}\}$, where $w_i$ is the lowest element of $S$ congruent with $i \mod n$.
        \item $c(S) = \max(\Ap(S, n)) - n + 1$.
    \end{itemize}
\end{lemma}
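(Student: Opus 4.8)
The plan is to establish the two assertions separately, using only that $S \subseteq \mathbb{N}$ is closed under addition, contains $0$, has finite complement, and that $n \in S^*$.

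First, for the description of $\Ap(S,n)$: since $\mathbb{N}\setminus S$ is finite, every residue class modulo $n$ contains elements of $S$, so for each $i \in [0,n-1]$ the element $w_i := \min\{s \in S \such s \equiv i \pmod n\}$ is well defined, lies in $S$, and the $w_i$ are pairwise distinct (they occupy distinct classes). I would then prove $\Ap(S,n) = \{w_0,\dots,w_{n-1}\}$ by double inclusion. For $\supseteq$: if $w_i - n \in S$, it would be a strictly smaller element of $S$ in class $i$, contradicting minimality, so $w_i - n \notin S$ and $w_i \in \Ap(S,n)$. For $\subseteq$: given $s \in \Ap(S,n)$ with $s \equiv i \pmod n$, minimality of $w_i$ over its class gives $s \geq w_i$, so $s = w_i + kn$ for some $k \in \mathbb{N}$; if $k \geq 1$ then $s - n = w_i + (k-1)n$ is a sum of elements of $S$, hence in $S$, contradicting $s - n \notin S$; thus $k = 0$ and $s = w_i$.

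Second, for the conductor formula: write $M := \max \Ap(S,n) = \max_{0 \le i \le n-1} w_i$, using the first part. I would show $[M-n+1,\infty) \subseteq S$ and $M - n \notin S$; since $M-n+1 \in S$ follows from the first inclusion, these two facts force $c(S) = M - n + 1$. For the inclusion: for $x \geq M - n + 1$, put $i := x \bmod n$; then $x - w_i$ is a multiple of $n$ with $x - w_i \geq M - n + 1 - w_i \geq -n + 1 > -n$ (using $w_i \leq M$), hence $x - w_i \geq 0$, so $x = w_i + (x - w_i) \in S$. For the non-membership: $M = w_j$ for some $j$, and $w_j \in \Ap(S,n)$ means precisely $w_j - n = M - n \notin S$.

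I do not expect a genuine obstacle here; the only point requiring a little care is the counting-by-congruences step used twice — namely that a multiple of $n$ which is strictly larger than $-n$ must be nonnegative, combined with the minimality defining $w_i$ over its residue class. Once that is handled cleanly, both bullets follow from elementary manipulations, and I would only be careful to invoke $n \in S$ (not merely $n \in \mathbb{N}$) wherever closure of $S$ under adding $n$ is used.
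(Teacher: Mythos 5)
Your proof is correct and complete. Note that the paper does not actually prove this lemma — it is stated as a known result imported from the reference \cite{numerical_semigroups} — so there is no internal argument to compare against. Your argument is the standard textbook proof of these two facts about Ap\'ery sets: characterizing $\Ap(S,n)$ via residue-class minima by double inclusion, and then deriving the conductor by showing $[M-n+1,\infty)\subseteq S$ while $M-n\notin S$ for $M=\max\Ap(S,n)$. All the steps are sound, including the pivotal counting-by-congruences observation that a multiple of $n$ exceeding $-n$ is nonnegative, and you correctly flag that closure under adding $n\in S$ (not just $n\in\mathbb{N}$) is what makes $w_i + (k-1)n$ and $w_i + (x-w_i)$ land in $S$.
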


\begin{construction} \label{construction:AG_polynomial_Apery}
    Let $ m^\prime  := \min \{s \in S \such s \geq m\}$. Choose $D_A$ as the subset formed by the first $m$ elements of the Ap{\'e}ry set $\Ap(S, m^\prime )$. Define $D_B$ as
    \begin{equation*}
        \begin{split}
        D_B := \{0, m^\prime , \ldots, (n-1) m^\prime \}.
        \end{split}
    \end{equation*}
    This is a solution for the AG polynomial code problem by Lemma \ref{lemma:Kunz_coord}. Its recovery threshold satisfies the following upper bound
    \begin{equation*}
        \begin{split}
            \max(D_A) + \max(D_B) +1 &= \max(D_A) + m^\prime  (n - 1) +1 \\
            &\leq \max(\Ap(S, m^\prime )) + m^\prime  (n - 1) +1\\
            &= c(S) + m^\prime  - 1 + m^\prime  (n - 1) +1\\
            &= c(S) + m^\prime  n\\
            &\leq c(S) + (m + m(S) - 1) n .
        \end{split}
    \end{equation*}
    Observe that if $m \in S$, then $ m^\prime  = m$ and
    \begin{equation*}
        \begin{split}
        \max(D_A) + \max(D_B) +1 = c(S)  + m + (n-1)m = c(S) + m n .
        \end{split}
    \end{equation*}
    Moreover, observe that both $D_A$ and $D_B$ contain $0$, so we do not need to order the set $D_A \times D_B$ for modifying the basis $\mathcal{B}$ by Remark \ref{remark:ordering_DADB}.
\end{construction}

For our next construction, we make use of the following lemma.

\begin{lemma} \label{lemma:AG_polynomial}
    Let $D_A, D_B \subseteq S$. Consider the sets
    \begin{equation*}
        \begin{split}
        &E_A := \{d - d^\prime  \in \mathbb{N} \such d, d^\prime  \in D_A, \quad d > d^\prime \},\\
        &E_B := \{d - d^\prime  \in \mathbb{N} \such d, d^\prime  \in D_B, \quad d > d^\prime \}.
        \end{split}
    \end{equation*}
    The sets $D_A$ and $D_B$ are a solution to the AG polynomial code problem if and only if $E_A \cap E_B = \emptyset$.
\end{lemma}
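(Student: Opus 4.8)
The statement is an ``if and only if'', so I would prove each direction by contraposition, since both sides are most naturally negated. The key observation is that the condition $a+b = a'+b'$ for distinct pairs $(a,b),(a',b') \in D_A \times D_B$ can be rearranged: if $a = a'$ then necessarily $b = b'$ (contradicting distinctness), and symmetrically for $b = b'$, so we may assume $a \neq a'$ and $b \neq b'$. Relabelling so that $a > a'$, the equation $a + b = a' + b'$ becomes $a - a' = b' - b$, and since the left side is a positive integer, we must have $b' > b$. Thus $a - a' \in E_A$ and $b' - b \in E_B$, and these two quantities are equal, so $E_A \cap E_B \neq \emptyset$. This establishes: \emph{if $D_A, D_B$ is not a solution, then $E_A \cap E_B \neq \emptyset$}, which is the contrapositive of the ``if'' direction.

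For the converse, suppose $E_A \cap E_B \neq \emptyset$, and pick a common element $e$. By definition of $E_A$ there are $a, a' \in D_A$ with $a > a'$ and $a - a' = e$; similarly there are $b, b' \in D_B$ with $b > b'$ and $b - b' = e$. Then $a + b' = a' + b' + e = a' + b$, i.e., the pairs $(a, b')$ and $(a', b)$ in $D_A \times D_B$ have the same sum. It remains to check these pairs are distinct: if $(a,b') = (a',b)$ then $a = a'$, contradicting $a > a'$. Hence $D_A, D_B$ violates the second condition of Definition \ref{definition:AG_poly_codes_solution} and so is not a solution. Since the size conditions on $D_A, D_B$ are assumed throughout and play no role in this equivalence, this completes both directions.

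I do not expect any genuine obstacle here; the lemma is essentially a bookkeeping translation of the ``distinct sums'' condition into a ``disjoint difference sets'' condition, in the same spirit as the parenthetical remark in Construction \ref{construction:optimal_polynomial} that (\ref{equation:polynomial_codes}) is equivalent to $|D_A + D_B| = |D_A|\,|D_B|$. The only point requiring a little care is handling the degenerate cases where one of the two pairs coincides, i.e., ensuring that a collision $a+b=a'+b'$ with $(a,b)\neq(a',b')$ really does force $a\neq a'$ \emph{and} $b\neq b'$ (so that both differences are genuinely nonzero and lie in $E_A$, $E_B$ respectively), and conversely that the reconstructed pair $(a,b')\neq(a',b)$. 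Both are immediate from the strict inequalities, so the proof is short.
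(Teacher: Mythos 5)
Your proposal is correct and takes essentially the same approach as the paper: both argue by negating the ``distinct sums'' condition and matching a collision $a+b=a'+b'$ with a common difference in $E_A \cap E_B$, and conversely. The paper compresses this to a single sentence; you spell out the degenerate-case checks ($a\neq a'$, $b\neq b'$, and that the reconstructed pairs are distinct), which the paper leaves implicit.
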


\begin{proof}
    There exist distinct elements $(d_A, d_B), (d_A^\prime , d_B^\prime ) \in D_A \times D_B$ such that $d_A + d_B = d_A^\prime  + d_B^\prime $ if and only if there are $e_A := d_A - d_A^\prime  \in E_A$ and $e_B := d_B - d_B^\prime  \in E_B$ such that $e_A = e_B$.
\end{proof}

\begin{construction} \label{construction:AG_polynomial_lemma}
    Consider the sets
    \begin{equation*}
        \begin{split}
            &D_A := \{c(S), c(S) + 1, \ldots, c(S) + m - 1\},\\
            &D_B := \{m_1, m_2, \ldots m_n\},
        \end{split}
    \end{equation*}
    where each $m_i$ is defined recursively as
    \begin{equation*}
        \begin{split}
        m_i :=
        \begin{cases}
            0 &\text{ if } i = 1,\\
            \min \{ s \in S \such s \geq m_{i-1} + m\} &\text{ if } i > 1.
        \end{cases}
        \end{split}
    \end{equation*}
    Applying Lemma \ref{lemma:AG_polynomial}, we conclude that $D_A$ and $D_B$ form a solution to the AG polynomial code problem, since
    \begin{equation*}
        \begin{split}
            &\max(E_A) = m - 1,\\
            &\min(E_B) \geq m,
        \end{split}
    \end{equation*}
    which implies that $E_A \cap E_B = \emptyset$. Let us study the recovery threshold of this solution. First, we have that $\max(D_A) = c(S) + m - 1$. Second, define $\mu_i := m_{i+1} - m_i - m$ for each $i = 1, 2, \ldots, n - 1$, and we observe that
    \begin{equation*}
        \begin{split}
        \max(D_B) = \sum_{i=1}^{n - 1} (m_{i+1} - m_i) = \sum_{i=1}^{n - 1} (\mu_i + m) = m (n - 1) + \sum_{i=1}^{n-1} \mu_i.
        \end{split}
    \end{equation*}
    Combining both equalities,
    \begin{equation*}
        \begin{split}
        \max(D_A) + \max(D_B) +1 = c(S) + m n + \sum_{i=1}^{n-1} \mu_i.
        \end{split}
    \end{equation*}
    Observe that if $m \in S$, then $\mu_i = 0$ for each $i = 1, 2, \ldots, n-1$ and
    \begin{equation*}
        \begin{split}
        \max(D_A) + \max(D_B) +1 = c(S) + m n .
        \end{split}
    \end{equation*}
\end{construction}

\begin{remark}
    Consider the sets $D_A$ and $D_B$ from Construction \ref{construction:AG_polynomial_lemma}. Define
    \begin{equation*}
        D_A^\prime  := (D_A \setminus \{s\}) \cup \{0\},
    \end{equation*}
    where $s$ is the unique element of the semigroup such that $c(S) \leq s \leq c(S) + m-1$ and $s$ is divisible by $m$. Suppose that $m \in S$. Then $E_B$ consists of multiples of $m$ and the sets $D_A^\prime $ and $D_B$ are a solution to the AG polynomial code problem, i.e., $E_A^\prime  \cap E_B = \emptyset$. This solution is useful due to the fact that $0 \in D_A \cap D_B$ and Remark \ref{remark:ordering_DADB}.
\end{remark}

Table \ref{table:AG_polynomial} summarizes the constructions solving the AG polynomial code problem.

\begin{table}[h]
    \centering
    \begin{tabular}{l|l|lll}
        & \textbf{if} $m \notin S$ & \textbf{if} $m \in S$ &  &  \\ \cline{1-3}
        \multicolumn{1}{l|}{\bfseries Construction \ref{construction:AG_polynomial_trivial}} & $2 c(S) + m n $                       & $2 c(S) + m n $ &  &  \\ \cline{1-3}
        \multicolumn{1}{l|}{\bfseries Construction \ref{construction:AG_polynomial_Apery}}   & $c(S) + m^\prime  n $                        & $c(S) + m n $   &  &  \\ \cline{1-3}
        \multicolumn{1}{l|}{\bfseries Construction \ref{construction:AG_polynomial_lemma}}   & $c(S) + mn + \sum_{i=1}^{n-1} \mu_i$ & $c(S) + m n $   &  &  \\ \cline{1-3}
    \end{tabular}
    \caption{Recovery threshold of proposed solutions to the AG polynomial code problem.}
    \label{table:AG_polynomial}
\end{table}

We conclude this section with a lower bound on the recovery threshold, which will be used in Section \ref{section:asymptotic} to show that Constructions \ref{construction:AG_polynomial_Apery} and \ref{construction:AG_polynomial_lemma} are asymptotically optimal.
\begin{proposition} \label{proposition:AG_polynomial_bound}
    Let $D_A$ and $D_B$ be a solution to the AG polynomial code problem. If $m n \geq n(S)$, then
    \begin{equation*}
        \begin{split}
        \max(D_A) + \max(D_B) +1 \geq g(S) + m n .
        \end{split}
    \end{equation*}
\end{proposition}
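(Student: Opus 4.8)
The plan is to reduce the statement to a count of the gaps of $S$ lying below $M := \max(D_A) + \max(D_B)$, and then to locate $M$ relative to the conductor $c(S)$. First I would record three elementary facts. Since $S$ is a numerical semigroup, the Minkowski sum satisfies $D_A + D_B \subseteq S$; since maxima add under Minkowski sums, $\max(D_A + D_B) = \max(D_A) + \max(D_B) = M$; and by Definition~\ref{definition:AG_poly_codes_solution}, being a solution means exactly that the $mn$ sums $a+b$ with $(a,b) \in D_A \times D_B$ are pairwise distinct, i.e.\ $|D_A + D_B| = mn$. Combining these, $D_A + D_B \subseteq S \cap [0,M]$ and therefore
\[
|S \cap [0,M]| \;\geq\; mn .
\]

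The key observation is that every gap of $S$ is at most $c(S)-1$, so as soon as $M \geq c(S)-1$ the interval $[0,M]$ contains all $g(S)$ gaps of $S$, whence $|S \cap [0,M]| = (M+1) - g(S)$. Plugging this into the displayed inequality gives $mn \leq M + 1 - g(S)$, i.e.\ $\max(D_A) + \max(D_B) + 1 = M + 1 \geq g(S) + mn$, which is the claim. Thus the entire statement reduces to showing that the hypothesis $mn \geq n(S)$ forces $M \geq c(S)-1$.

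For that last step I would split into cases according to whether $M \geq c(S)$. If $M \geq c(S)$ there is nothing to do. If $M \leq c(S)-1$, then $S \cap [0,M] \subseteq S \cap [0,c(S)-1]$, so $mn \leq |S \cap [0,M]| \leq n(S)$; together with $mn \geq n(S)$ this pins down $mn = n(S)$ and forces $D_A + D_B$ to coincide with $S \cap [0,c(S)-1]$, the set of non-gaps of $S$ below the conductor. The delicate point — and the step I expect to be the real obstacle — is to push $M$ all the way up to the Frobenius number $c(S)-1$ in this boundary regime: the pure counting bound is not enough there, so one must exploit that $D_A + D_B$ is not an arbitrary $n(S)$-element subset of $S$ but a genuine Minkowski sum, so that $0 \in D_A \cap D_B$, hence $D_A, D_B \subseteq D_A + D_B$, and $M$ is realised as $\max(D_A) + \max(D_B)$ with both summands lying among the non-gaps below $c(S)$; squeezing $M$ up to $c(S)-1$ from this structural information is where the work lies. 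Away from the boundary — in particular in the asymptotic setting of Section~\ref{section:asymptotic}, where $mn > n(S)$ — the argument is immediate, since $|S \cap [0,M]| \geq mn > n(S)$ is impossible for $M \leq c(S)-1$ (there are only $n(S)$ elements of $S$ below $c(S)$, and $c(S) \in S$), forcing $M \geq c(S) > c(S)-1$.
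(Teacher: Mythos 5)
Your reduction is precisely the paper's argument: treat $D_A+D_B$ as an $mn$-element subset of $S$ (Definition~\ref{definition:AG_poly_codes_solution} encodes exactly $|D_A+D_B|=mn$), set $M := \max(D_A)+\max(D_B) = \max(D_A+D_B)$, count elements of $S$ up to $M$, and close once all $g(S)$ gaps lie below $M$, i.e.\ once $M \geq c(S)-1$. You are right that this last inequality carries the whole weight, and your unease about the boundary case $mn=n(S)$ is well founded --- more so than you may realise. The paper's proof simply asserts $\max(D_A+D_B)\geq c(S)$ from $D_A+D_B\subseteq S$ and $|D_A+D_B|=mn$, but that inference is only valid when $mn>n(S)$: an $n(S)$-element subset of $S$ can sit entirely inside $[0,c(S)-1]$, namely $D_A+D_B=S\cap[0,c(S)-1]$, whose maximum is at most $c(S)-2$ because $c(S)-1\notin S$. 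In this boundary regime the Minkowski-sum structure you propose to exploit does not rescue the bound, because the proposition itself can fail there. For instance take $S=\langle 3,4\rangle$, so $c(S)=6$ and $g(S)=n(S)=3$; with $m=3$, $n=1$, $D_A=\{0,3,4\}$, $D_B=\{0\}$ one has a valid solution with $mn=3=n(S)$, yet $\max(D_A)+\max(D_B)+1=5<6=g(S)+mn$. So the obstacle you flagged is a genuine gap and is shared with the paper's own proof: the argument (yours and theirs) is complete exactly when $mn>n(S)$, where one extra element of $D_A+D_B$ must land in $[c(S),\infty)$ and forces $M\geq c(S)$, and at the boundary $mn=n(S)$ no amount of structural squeezing will succeed because the inequality is simply false.
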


\begin{proof}
    Consider
    \begin{equation*}
        \begin{split}
        & g^\prime  := |\{x \in \mathbb{N}\setminus{S} \such x \leq \max(D_A + D_B)\}|,\\
        & n^\prime  := |\{s \in S \such s \leq \max(D_A + D_B)\}|,
        \end{split}
    \end{equation*}
    and observe that $\max(D_A + D_B) = g^\prime  + n^\prime  - 1$. If $m n \geq n(S)$, then $\max(D_A + D_B) \geq c(S)$ since
    \begin{equation} \label{equation:AG_polynomial_bound_proof}
        D_A + D_B \subseteq S \text{ and } |D_A + D_B| = m n
    \end{equation}
    We deduce that $\max(D_A) + \max(D_B) \geq c(S)$ and $ g^\prime  = g(S)$. In addition, $ n^\prime  \geq mn$ because of the definition of $ n^\prime  $ and (\ref{equation:AG_polynomial_bound_proof}). Using these inequalities we obtain
    \begin{equation*}
        \begin{split}
        \max(D_A + D_B) +1 = g(S) + n^\prime  \geq g(S) + m n.
        \end{split}
    \end{equation*}
\end{proof}

\section{Algebraic-geometry matdot codes} \label{section:matdot}

\subsection{The classical solution} \label{subsection:matdot}

We now consider the matdot codes from \cite{matdot_codes}, which split $A$ and $B$ in a different way. In contrast with polynomial codes (Section \ref{section:polynomial}), matdot codes offer optimal recovery threshold at the cost of performing worse in communication and computation. Let
\begin{equation*}
    \begin{split}
    A = \begin{pmatrix}
        A_1 & A_2 & \ldots & A_m
        \end{pmatrix}
    ,\quad
    B = \begin{pmatrix}
        B_1 \\
        B_2 \\
        \vdots\\
        B_m
        \end{pmatrix},
    \end{split}
\end{equation*}
with $A_i \in \mathbb{F}_q^{r \times \frac{s}{m}}$ and $B_i \in \mathbb{F}_q^{\frac{s}{m} \times t}$. This subdivision yields
\begin{equation*}
    \begin{split}
    A B = \sum_{i=1}^m A_i B_i.
    \end{split}
\end{equation*}

Following the steps of polynomial codes, define the polynomials
\begin{equation*}
    \begin{split}
    p_A(x) := \sum_{i=1}^m A_i x^{a_i}
    ,\quad
    p_B(x) := \sum_{i=1}^m B_i x^{b_i},
    \end{split}
\end{equation*}
with the degrees of the monomials satisfying
\begin{equation} \label{equation:matdot_codes}
    \begin{split}
    \exists d \in \mathbb{N} \text{ such that there are exactly } m \text{ pairs } (a_i, b_i) \text{ such that } d = a_i + b_i.
    \end{split}
\end{equation}
If we consider the polynomial $h := p_A p_B$, we observe that, in virtue of (\ref{equation:matdot_codes}), the term of degree $d$ has $AB$ as its coefficient. As in polynomial codes, the polynomials $p_A$, $p_B$ and $h$ ``contain the information'' of $A$,$B$ and $AB$, respectively.

With these polynomials, the design of the algorithm is straightforward by translating that of polynomial codes. We choose distinct points $x_1, x_2, \ldots, x_N \in \mathbb{F}_q$. The master node shares with each one of the worker nodes a point and the polynomials $p_A$ and $p_B$. Then, the worker nodes compute $h(x_i)$ through the identity $h(x_i) = p_A(x_i) p_B(x_i)$. When enough number of them finish, the master node gathers these evaluations and interpolates $h$, recovering $AB$. This could be summarized in a commutative diagram similar to Figure \ref{figure:dmm_framework}.

Following the lines of polynomial codes, the construction of the sets $D_A = \{ a_1, a_2, \ldots, $ $ a_m \}$ and $D_B = \{ b_1, b_2, \ldots, b_m \} $ matters since the interpolation of $h$ is determined by $ \deg(h)+1 = \max(D_A) + \max(D_B)+1$. The following is the solution from \cite{matdot_codes}.

\begin{construction}[\textbf{\cite{matdot_codes}}] \label{construction:optimal_matdot}
    Let $m \in \mathbb{N}^*$. Consider the sets
    \begin{equation*}
        \begin{split}
        D_A = D_B := \{0, 1, \ldots, m-1\}.
        \end{split}
    \end{equation*}
    This construction satisfies (\ref{equation:matdot_codes}). We have that $\max(D_A) + \max(D_B) +1 = 2m - 1$. Indeed, it is ``optimal'' since, for any sets $ D_A^\prime , D_B^\prime  \subseteq \mathbb{N}$ satisfying (\ref{equation:matdot_codes}), they both satisfy $\max(D_A^\prime ) \geq m - 1$ and $ \max(D_B^\prime ) \geq m-1$, hence, $\max(D_A^\prime ) + \max(D_B^\prime ) +1 \geq 2m-1$.
\end{construction}

\subsection{The Algebraic-Geometry solution} \label{subsection:ag_matdot}

As in the case of polynomial codes, matdot codes require the number of workers to satisfy $ N \leq q $, . We proceed as in Section \ref{section:polynomial}, by replacing polynomials with algebraic functions in order to circumvent this issue. Once again, the price to pay in computational complexity and asmptotic parameters is small, see Sections \ref{section:interpolating} and \ref{section:asymptotic}.

Let $F/\mathbb{F}_q$ be an algebraic function field and $Q \in \mathbb{P}$ a place. Let $ \mathcal{B} = \{f_{s_0}, f_{s_1}, \ldots\}$ be a basis of $\mathcal{L}(\infty Q)$ with $\nu(f_{s_i}) = s_i \in S$, the Weierstrass semigroup of $Q$. We start by defining the analogous of Definition \ref{definition:AG_poly_codes_solution} for AG matdot codes:

\begin{definition} \label{definition:AG_matdot_codes}
    Given $m \in \mathbb{N}^*$, we say that $D_A, D_B \subseteq S$ is a solution to the AG matdot code problem if
    \begin{itemize}
        \item The sets $D_A$ and $D_B$ are of size $m$.
        \item There exists $d \in D_A + D_B$ such that there are exactly $m$ pairs $(a, b) \in D_A \times D_B$ satisfying $d = a + b$.
    \end{itemize}
    We define the recovery threshold as $\max(D_A) + \max(D_B)+1$ and we say that the solution is optimal if the recovery threshold is minimum among all the possible solutions (observe that $d$ is not fixed, only $m$ is fixed).
\end{definition}

\begin{remark}
    Observe that if $S = \mathbb{N}$, Definition \ref{definition:AG_matdot_codes} recovers (\ref{equation:matdot_codes}).
\end{remark}

In order to proceed with the DMM algorithm, we need to tweak again the basis $ \mathcal{B} $. The main difficulty is that $f_{a_i} f_{b_j} = f_{a_i + b_j}$ may not always be guaranteed. Theorem \ref{theorem:basis_matdot} below provides the construction of a basis that we will be able to use in general. 

\begin{lemma} \label{lemma:AG_matdot_basis}
    Let $a \in S$ be such that $0 < a < d$. Consider $D \subseteq S$ such that for each $b \in D$ it holds that $a + b \geq d$ and $a > b$. Then, there exists a basis $ \mathcal{B}^\prime  = \{f_{s_1}^\prime , f_{s_2}^\prime , \ldots\}$ satisfying
    \begin{equation*}
        \begin{split}
        \pi_{f_d^\prime }(f_a^\prime  f_b^\prime ) =
            \begin{cases}
                1 & \text{if } a + b = d,\\
                0 & \text{if } a + b \neq d,
            \end{cases}
        \end{split}
    \end{equation*}
    for every $b \in D$.
\end{lemma}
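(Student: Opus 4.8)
The plan is to start from the given basis $\mathcal{B} = \{f_{s_0}, f_{s_1}, \ldots\}$ and modify only finitely many basis elements — specifically, we will correct the element $f_d$ (and possibly leave all others untouched, or redefine $f_a$ alone). The key observation is that for each $b \in D$, the product $f_a f_b$ lies in $\mathcal{L}(\infty Q)$ and has pole order $\nu_Q(f_a f_b) = a + b \geq d$ at $Q$ (and no other poles), so it can be written uniquely in the basis $\mathcal{B}$ as $f_a f_b = \sum_{s \in S,\ s \leq a+b} \lambda_s^{(b)} f_s$. What we want is that the coefficient $\lambda_d^{(b)} = \pi_{f_d}(f_a f_b)$ equals $1$ when $a+b=d$ and $0$ otherwise. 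If $a + b = d$ then $f_a f_b$ has pole order exactly $d$, so $\lambda_d^{(b)}$ is automatically a nonzero scalar; after rescaling $f_d$ (replacing $f_d$ by $\lambda_d^{(b)} f_d$, or equivalently rescaling $f_b$ or $f_a$) we may assume it is $1$. The subtle case is $a + b > d$, where a priori the coefficient $\lambda_d^{(b)}$ of $f_d$ in the expansion of $f_a f_b$ could be nonzero.

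To kill those unwanted coefficients simultaneously, I would redefine $f_d$ by absorbing them. Concretely, set
\begin{equation*}
    f_d^\prime := f_d + \sum_{b \in D,\ a+b>d} \mu_b\, g_b,
\end{equation*}
for suitable scalars $\mu_b$ and suitable functions $g_b \in \mathcal{L}(\infty Q)$; but a cleaner route is to note that the hypotheses $a > b$ for all $b \in D$, together with $a < d$, force the set $\{a + b : b \in D,\ a+b > d\}$ to consist of distinct integers all strictly between $d$ and $2a$, and none of them equals $d$. Process the $b \in D$ with $a+b > d$ in increasing order of $a+b$. For the smallest such value $a + b_1$, the product $f_a f_{b_1}$ has leading term (pole order $a+b_1$) and lower-order terms; we redefine $f_{a+b_1}^\prime$ so that $f_a f_{b_1}$ equals $f_{a+b_1}^\prime$ plus terms of strictly lower pole order — i.e. we simply \emph{choose} $f_{a+b_1}^\prime := f_a f_{b_1}$ up to normalization, as long as $a + b_1 \in S$, which it is. Iterating, and using that distinct $b$'s give distinct sums $a+b$ (because $a$ is fixed), each redefinition $f_{a+b}^\prime := f_a f_b$ is consistent and affects only the single basis vector indexed $a+b$. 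Since $d \notin \{a+b : b \in D\} \cup \{b : b \in D\}$ (as $a+b \geq d$ with equality handled separately, and $b < a < d$), the element $f_d$ is either left untouched or rescaled once, and after these redefinitions $\pi_{f_d^\prime}(f_a^\prime f_b^\prime)$ has the required values: it is $1$ when $a+b=d$, and $0$ when $a+b > d$ because then $f_a^\prime f_b^\prime$ is, by construction, equal to $f_{a+b}^\prime$ (a basis vector with index $\neq d$).

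The main obstacle, and the point requiring care, is consistency of the redefinitions: when we set $f_{a+b}^\prime := f_a^\prime f_b^\prime$, we must make sure this does not conflict with an earlier redefinition of the same index or with the value at index $d$, and we must confirm that $f_a^\prime$ and $f_b^\prime$ themselves are still the original $f_a, f_b$ (possibly rescaled) rather than something we have already altered. This is where the hypotheses $a > b$ and $0 < a < d$ are used: $a > b \geq m(S)$ ensures $a \notin D$ so $f_a$ is never one of the redefined elements; and the indices $a+b$ for $b \in D$ with $a + b > d$ are pairwise distinct (fixed $a$) and all different from $d$, so each gets redefined at most once and never clashes with $d$. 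I would also remark that the resulting $\mathcal{B}^\prime$ is still a basis, since each step replaces one basis vector by itself plus a combination of vectors of strictly smaller pole order (a unitriangular change of coordinates), hence invertible.
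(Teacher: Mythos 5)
Your construction is correct as a proof of the lemma, but it takes a genuinely different route from the paper, and the difference matters for how the lemma is later used. You redefine the \emph{products'} basis vectors, setting $f_{a+b}^\prime := f_a f_b$ for each $b \in D$ with $a+b>d$ (and rescaling once in the case $a+b=d$); since those indices $a+b$ are pairwise distinct, strictly larger than $d$, and disjoint from $\{a\}\cup D \cup\{d\}$, this is a valid triangular change of basis and tautologically gives $\pi_{f_d^\prime}(f_a^\prime f_b^\prime)=0$ for $a+b>d$. The paper instead modifies \emph{only} $f_a$, replacing it by $f_a^\prime = \lambda_0 f_a - \sum_i \lambda_i f_{d-b_i}$, i.e.\ subtracting off multiples of basis elements of valuation $d-b_i < d \leq a+b_i$, with coefficients chosen to cancel $\pi_{f_d}(f_a f_{b_i})$. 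Your argument is simpler and avoids the paper's implicit requirement that $d-b_i\in S$. However, the paper's version has a crucial invariance property that yours lacks: it leaves every $f_s$ with $s\neq a$ unchanged, and since all the subtracted terms have valuation $d-b_i<d$, it does not alter $\pi_{f_d}$ of any function. Theorem~\ref{theorem:basis_matdot} relies on exactly this when it iterates the lemma over $d_i\in D_A\cup D_B$ in increasing order: modifying only $f_{d_i}$ guarantees that previously arranged values $\pi_{f_d}(f_{d_{i'}}f_{e'})$ for $d_{i'},e'<d_i$ are not disturbed. Your construction modifies $f_{d_i+e}$ at valuations $>d$, so a later iteration could try to redefine the \emph{same} index $d_i+e=d_j+e'$ with an incompatible value, and more generally redefining $f_{d_i+e}$ can change $\pi_{f_d}$ of a previously controlled product. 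So while your proof establishes the lemma as stated, you should note that it would not drop in as a replacement inside the proof of Theorem~\ref{theorem:basis_matdot} without further work.
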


\begin{proof}
    We modify the element $f_a \in \mathcal{B} $ successively to obtain the new basis with the desired property. Consider $D := \{b_1 < b_2 < \ldots < b_r\}$. Define
    \begin{equation*}
        \begin{split}
        f_{a,i} :=
        \begin{cases}
            f_a & \text{if } i = 0,\\
            \pi_{f_d}(f_a f_{b_i})^{-1} f_a & \text{if } i = 1 \text{ and } a + b_1 = d,\\
            f_{a, i-1} - \pi_{f_d}(f_{a,i-1} f_{b_i}) \pi_{f_d}(f_{d-b_i} f_{b_i})^{-1} f_{d - b_i} & \text{otherwise}.
        \end{cases}
        \end{split}
    \end{equation*}
    and consider $f_a^\prime  := f_{a,r}$. Due to the construction of $f_a^\prime $, it is clear that $ \mathcal{B}^\prime  := ( 
\mathcal{B} \setminus{\{f_a\}}) \cup \{f_a^\prime \}$ is a basis of $\mathcal{L} (\infty Q)$.

    Now, let us check that $\mathcal{B}^\prime $ satisfies the property of the statement. If $b_i \in D$ is such that $a + b_i > d$, then $f_a^\prime  = f_{a,i} + g$, for some $g \in \mathcal{L}(\infty Q)$ with $g = 0$ if $i = r$ or $\nu_Q(g) \leq d - b_{i+1}$ otherwise. This is because  $f_a^\prime $ is of the form
    \begin{equation*}
        f_a^\prime  := \lambda_0 f_a - \lambda_1 f_{d - b_1} - \ldots - \lambda_r f_{d - b_r},
    \end{equation*}
    for appropriate $\lambda_0, \lambda_1, \lambda_2, \ldots \lambda_r \in \mathbb{F}_q$. We have that
    \begin{equation} \label{equation:lemma_ag_matdot1}
        \begin{split}
        \pi_{f_d}(f_a^\prime  f_{b_i}) = \pi_{f_d}((f_{a,i} + g) f_{b_i}) = \pi_{f_d}(f_{a,i} f_{b_i}) + \pi_{f_d}(g f_{b_i}).
        \end{split}
    \end{equation}
    Observe that
    \begin{equation} \label{equation:lemma_ag_matdot2}
        \begin{split}
            \pi_{f_d}(f_{a,i} f_{b_i}) &= \pi_{f_d}( (f_{a,i-1} - \pi_{f_d}(f_{a,i-1} f_{b_i}) \pi_{f_d}(f_{d-b_i} f_{b_i}) ^{-1} f_{d - b_i} ) f_{b_i})\\
            &= \pi_{f_d}(f_{a, i-1} f_{b_i}) - \pi_{f_d}(f_{a, i-1} f_{b_i}) \pi_{f_d}(f_{d-b_i} f_{b_i})^{-1} \pi_{f_d}(f_{d-b_i} f_{b_i})\\
            &= 0.
        \end{split}
    \end{equation}
    Morever, if $i \neq r$, the fact $\nu_Q(g) \leq d-b_{i+1} < d - b_{i}$ implies that
    \begin{equation*}
        \begin{split}
        \nu_Q(g f_{b_i}) = \nu_Q(g) + \nu_Q(f_{b_i}) = \nu_Q(g) + b_i < d - b_i + b_i = d,
        \end{split}
    \end{equation*}
    and we deduce that $\pi_{f_d}(g f_{b_i}) = 0$. We obtain the same if $i = r$. Combining this assertion together with (\ref{equation:lemma_ag_matdot1}) and (\ref{equation:lemma_ag_matdot2}) we conclude that
    \begin{equation*}
        \pi_{f_d}(f_a^\prime  f_{b_i}) = 0.
    \end{equation*}
    If $a + b_1 = d$, then $\pi_{f_d}(f_a^\prime  f_b) = 1$ by a similar proof and the lemma follows.
\end{proof}

\begin{theorem} \label{theorem:basis_matdot}
    Let $D_A, D_B \subseteq S$ and $d \in S \setminus (D_A \cup D_B)$. There exists a basis $\mathcal{B}$ of $\mathcal{L}(\infty Q)$ satisfying
    \begin{equation} \label{equation:basis_matdot}
        \begin{split}
            \pi_{f_d} (f_a f_b) =
            \begin{cases}
                1 &\text{if } a + b = d,\\
                0 &\text{if } a + b \neq d,
            \end{cases}
        \end{split}
    \end{equation}
    for every $a \in D_A$ and $b \in D_B$.
\end{theorem}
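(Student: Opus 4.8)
The plan is to build $\mathcal{B}$ by starting from any basis $\{f_s\}_{s\in S}$ of $\mathcal{L}(\infty Q)$ with $\nu_Q(f_s)=s$ and repairing it one vector at a time, each repair being an instance of Lemma \ref{lemma:AG_matdot_basis}, until (\ref{equation:basis_matdot}) holds for every pair of $D_A\times D_B$. The first step is to discard the easy pairs: if $a+b<d$ then $f_af_b\in\mathcal{L}((a+b)Q)\subseteq\mathcal{L}((d-1)Q)$, so $\pi_{f_d}(f_af_b)=0$ automatically, and this stays true under any basis change that replaces each $f_s$ by $f_s$ plus a combination of $f_{s'}$'s with $s'<s$. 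Hence only the \emph{relevant} pairs, those with $a+b\ge d$, require attention; note that a relevant pair with $a+b=d$ has $0<a,b<d$ since $d\notin D_A\cup D_B$. Call a relevant pair \emph{boundary} if $a+b=d$ and \emph{bad} if $a+b>d$.

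Because $f_af_b=f_bf_a$, a relevant pair can be handled through the larger of $a,b$. So I would let $s$ run over the indices that occur as $\max(a,b)$ for some relevant pair, in increasing order, and on reaching $s$ apply Lemma \ref{lemma:AG_matdot_basis} with that $a:=s$ and with $D$ the set of partners $c<s$ for which $(s,c)$ or $(c,s)$ belongs to $D_A\times D_B$ and $s+c\ge d$; this enforces $\pi_{f_d}(f_sf_c)=[s+c=d]$ for all those $c$ simultaneously. The reason the procedure does not collapse is the shape of the correction: Lemma \ref{lemma:AG_matdot_basis} changes only $f_s$ and adds to it a combination of the $f_{d-c}$, all of pole order $\le s$, so it alters $\pi_{f_d}(f_{s'}f_{c'})$ only for $s'=s$; thus every relation fixed for smaller indices survives, and a relevant pair whose larger index exceeds $s$ is dealt with later, at its own turn. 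Iterating over all such $s$ gives a basis satisfying (\ref{equation:basis_matdot}) for every relevant pair, hence for all of $D_A\times D_B$.

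What actually requires work — and what I expect to be the main obstacle — is making every call of Lemma \ref{lemma:AG_matdot_basis} legitimate and handling three degenerate configurations it does not cover literally. (i) The lemma writes $f_{d-c}$, so it tacitly needs $d-c\in S$: for a boundary partner this is free, but for a bad partner one must either verify $d-c\in S$ or replace $f_{d-c}$ by some $f_e$ with $\pi_{f_d}(f_ef_c)\ne 0$; such an $e$ exists whenever the correction is actually needed, because $\pi_{f_d}(f_sf_c)\ne 0$ forces $f_c\mathcal{L}(\infty Q)\not\subseteq\ker\pi_{f_d}$, and for $c<d$ a natural candidate is any $e\in S$ with $e+c=d$, since then $f_ef_c$ has pole order exactly $d$ and hence nonzero $f_d$-coefficient. (ii) The lemma needs the strict inequality $a>c$, so it does not address a relevant pair $(a,a)$ with $a\in D_A\cap D_B$ (which forces $2a\ge d$); the clean remedy — which also removes the square-root problem one would otherwise face when normalizing a diagonal boundary pair to exactly $1$ — is to pick once a boundary pair $(a_0,b_0)$, if one exists, redefine $f_d:=f_{a_0}f_{b_0}$ (a valid change of basis, since the old $f_d$ occurs in $f_{a_0}f_{b_0}$ with nonzero coefficient and every other term has pole order $<d$), and thereafter treat $f_{a_0},f_{b_0}$ as fixed pivots satisfying $\pi_{f_d}(f_{a_0}f_{b_0})=1$, never modified again, cancelling the $f_d$-component of every other relevant pair and normalizing every other boundary pair against these pivots. (iii) The lemma requires $a<d$; indices $\ge d$ occurring in relevant pairs need the same cancellation step but possibly with a corrector of larger pole order, and here one must reinspect the processing order to be sure no earlier relation is disturbed. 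Tying together the interference-free ordering with the existence of admissible correctors in (i) and (iii) is the real heart of the proof.
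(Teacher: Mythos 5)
Your strategy is essentially the paper's: build the basis by modifying one $f_s$ at a time in increasing order of pole order, using Lemma \ref{lemma:AG_matdot_basis} as the engine, and relying on the fact that a correction to $f_s$ only touches $\pi_{f_d}(f_{s'}f_{c})$ for $s'=s$ so that earlier relations survive. The paper frames the iteration over $D_A\cup D_B=\{d_1<\cdots<d_r\}$ and starts at the first $d_j$ with $2d_j\ge d$; your ``run over indices occurring as $\max(a,b)$ for some relevant pair'' is exactly the same set, since $d_i<d_j$ forces $e+d_i<2d_j\le d$ for every partner $e<d_i$. So the approach is the same, and your observation about the easy pairs $a+b<d$ being automatic and stable under lower-order basis changes is correct and implicit in the paper.

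Where you add value is in flagging hidden hypotheses. Your point~(i) is a genuine gap in the paper's Lemma \ref{lemma:AG_matdot_basis} as written: the corrector $f_{d-b_i}$ only exists when $d-b_i\in S$, and nothing in the lemma's hypotheses guarantees this when $a+b_i>d$. The paper never addresses it. In the intended application (a solution to the AG matdot problem, Definition \ref{definition:AG_matdot_codes}) one always has $D_B=d-D_A$, so $d-e\in D_A\cup D_B\subseteq S$ for every partner $e$ used in the proof of Theorem \ref{theorem:basis_matdot}; but Theorem \ref{theorem:basis_matdot} is stated for arbitrary $D_A,D_B\subseteq S$, and there the gap is real. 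Note, however, that your proposed repair is circular: when $d-c\notin S$ your ``natural candidate $e$ with $e+c=d$'' does not exist, and the weaker claim that $\pi_{f_d}(f_sf_c)\ne 0$ forces \emph{some} admissible corrector only gives $e=s$ itself, not an $e<s$ of strictly smaller pole order, so it does not close the gap. Your point~(iii) (indices $\ge d$) is likewise a gap in the stated generality but vacuous in the application, since $D_B=d-D_A$ forces $\max(D_A),\max(D_B)<d$.

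Your point~(ii) is the one the paper does address, and its fix is cleaner than yours. For the diagonal boundary pair the paper sets $f_d:=f_{d/2}^2$ up front (when $d/2\in D_A\cap D_B$), which kills the square-root/normalization problem you describe; this is a special case of your ``redefine $f_d$ from a boundary pair'' idea. For a diagonal bad pair $(d_i,d_i)$ with $2d_i>d$ the paper redefines $f_{2d_i}:=f_{d_i}'f_{d_i}'$, which makes $\pi_{f_d}(f_{d_i}'^2)=0$ by construction and, by a pole-order count, disturbs no previously established relation. Your pivot proposal does not obviously handle these diagonal bad pairs: cancelling $\pi_{f_d}(f_a^2)$ by replacing $f_a\mapsto f_a-\lambda f_e$ leads to a quadratic condition in $\lambda$, which need not be solvable over $\mathbb{F}_q$ and degenerates in characteristic~$2$. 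So while your critique of Lemma \ref{lemma:AG_matdot_basis} in items (i) and (iii) is well founded, the concrete repairs you sketch are not yet proofs, and for item (ii) the paper's device of redefining $f_{2d_i}$ rather than $f_{d_i}$ is the right move.
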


\begin{proof}
    Given a basis $ \mathcal{B} $, we modify it in ascending order of valuation.

    Define $D := D_A \cup D_B = \{d_1 < d_2 < \ldots < d_r\}$. Let $d_j \in D$ the first element of $D$ such that $2 d_j \geq d$.  If $d_j = \frac{d}{2} \in D_A \cap D_B$, we start by setting $f_d := f_{\frac{d}{2}} f_{\frac{d}{2}}$. Next, for $i = j, j+1, \ldots, r$, in increasing order we apply Lemma \ref{lemma:AG_matdot_basis} for each $d_i \in D$ distinguishing three situations:
    \begin{itemize}
        \item If $d_i \in D_A \setminus{D_B}$, then $f_{d_i} := f_{d_i}^\prime $ with respect to the set
            \begin{equation*}
                \{e \in D_B \such e < d_i,\quad e + d_i \geq d\},
            \end{equation*}
            as in Lemma \ref{lemma:AG_matdot_basis}.
        \item If $d_i \in D_B \setminus{D_A}$, then $f_{d_i} := f_{d_i}^\prime  $ with respect to the set
            \begin{equation*}
                \{e \in D_A \such e < d_i,\quad e + d_i \geq d\},
            \end{equation*}
            as in Lemma \ref{lemma:AG_matdot_basis}.
        \item If $d_i \in D_A \cap D_B$, then $f_{d_i} := f_{d_i}^\prime  $ with respect to the set
            \begin{equation*}
                \{e \in D \such e < d_i,\quad e + d_i \geq d\},
            \end{equation*}
            as in Lemma \ref{lemma:AG_matdot_basis}, and if $2 d_i > d $, we redefine $f_{2 d_i} := f_{d_i}^\prime  f_{d_i}^\prime $.
    \end{itemize}
    Observe that we are modifying recursively each $f_{a_i}$ in such a way that the statement is satisfied for every product $f_{a_i} f_{b_j}$ for $a_i \in D_A$ and $b_j \in D_B$. Since each element of the basis is modified without disturbing this property for elements of lower valuation, the result follows.
\end{proof}

The proof of Theorem \ref{theorem:basis_matdot} gives an algorithm for constructing the basis satisfying (\ref{equation:basis_matdot}). With that basis, we define the functions $p_A \in \mathcal{L}(\infty Q)^{r \times \frac{s}{m}}$ and $p_B \in \mathcal{L}(\infty Q)^{\frac{s}{m} \times t}$ as 
\begin{equation*}
    \begin{split}
    p_A := \sum_{i=1}^m A_i f_{a_i}
    ,\quad
    p_B := \sum_{i=1}^m B_i f_{b_i},
    \end{split}
\end{equation*}
and we define $h := p_A p_B$.  Let $P_1, P_2, \ldots P_N \in \mathbb{P}$ be distinct rational places, we proceed as in matdot codes. The master node shares a place $P_i$ and both $p_A$ and $p_B$ with the $i$th worker node. When enough of them end the computation of $h(P_i) = p_A (P_i) p_B(P_i)$, the master node interpolates $h$, recovering $AB$ as the projection $\pi_{f_d}(h)$ (this is because the basis satisfies (\ref{equation:basis_matdot})).

We dedicate the remainder of the section to finding optimal solutions to the AG matdot code problem.

\begin{construction} \label{construction:AG_matdot_trivial}
    Let $m \in \mathbb{N}^*$ and a numerical semigroup $S$. Consider the sets
    \begin{equation*}
        \begin{split}
        D_A = D_B :=  c(S) + \{0,1,\ldots, m-1\} = \{c(S), c(S) + 1, \ldots, c(S) + m - 1\}.
        \end{split}
    \end{equation*}
    These sets form a solution to the AG matdot code problem, where $d = 2c(S) + m - 1$. Its recovery threshold is also $2(c(S) + m) - 1$.
\end{construction}

Construction \ref{construction:AG_matdot_trivial} is the trivial generalization of Construction \ref{construction:optimal_matdot}. 

The presence of number $d$ in Definition \ref{definition:AG_matdot_codes} gives us some room to work and allows us to exploit the semigroup structure of $S$. In order to study the optimal solutions to the AG matdot code problem, we need some technical results. See \cite{numerical_semigroups}.

\begin{definition} \label{definition:s_order}
    We define the partial order $\leq_S$ over $\mathbb{N}$ given by
    \begin{equation*}
        a \leq_S b \iff b - a \in S.
    \end{equation*}
\end{definition}

\begin{lemma} \label{lemma:AG_matdot_optimal1}
    Let $D_A$ and $D_B$ be an optimal solution to the AG matdot code problem. If $s \in S \cap [\min(D_A), \max(D_A)]$, then
    \begin{equation*}
        \begin{split}
        s \leq_S d \iff s \in D_A.
        \end{split}
    \end{equation*}
\end{lemma}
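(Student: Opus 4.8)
The plan is to first extract the rigidity that the matdot condition imposes on any solution, optimal or not. Since $|D_A| = |D_B| = m$ and there are exactly $m$ pairs $(a,b) \in D_A \times D_B$ with $a + b = d$, each $a \in D_A$ can be matched with at most one partner, namely $b = d - a$, so reaching the count $m$ forces $d - a \in D_B$ for every $a \in D_A$; as $|d - D_A| = m = |D_B|$, this gives $D_B = d - D_A$. In particular $d - a \in D_B \subseteq S$ for all $a \in D_A$, i.e. $a \leq_S d$, which already establishes the implication $s \in D_A \Rightarrow s \leq_S d$ (the interval hypothesis being vacuous in that case). I would also record the identity $\max(D_B) = d - \min(D_A)$, so the recovery threshold rewrites as $\max(D_A) + \max(D_B) + 1 = d + 1 + \bigl(\max(D_A) - \min(D_A)\bigr)$; that is, for a fixed $d$ it depends only on the "width" of $D_A$.

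For the converse I would argue by contradiction with an exchange argument. Assume $s \in S \cap [\min(D_A), \max(D_A)]$ with $s \leq_S d$ but $s \notin D_A$; note $m \geq 2$, since for $m = 1$ the interval is the single point $\min(D_A) = \max(D_A) \in D_A$ and there is nothing to prove. Because $s \notin D_A$ we have $\min(D_A) < s < \max(D_A)$ strictly. Put $D_A' := (D_A \setminus \{\max(D_A)\}) \cup \{s\}$ and $D_B' := d - D_A'$, and check that $(D_A', D_B')$ is again a solution to the AG matdot code problem \emph{with the same $d$}: it has size $m$ (as $s \notin D_A$), $D_A' \subseteq S$, and $D_B' \subseteq S$ because $d - a \in S$ for each $a \in D_A'$ (either an old element, handled above, or $s$, handled by $s \leq_S d$); the pair count is again exactly $m$ by the same reasoning as before.

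Finally I would compare recovery thresholds. Since $s > \min(D_A)$ and $\min(D_A)$ is not removed, $\min(D_A') = \min(D_A)$; since $s < \max(D_A)$ and the second-largest element of $D_A$ is also $< \max(D_A)$, we get $\max(D_A') < \max(D_A)$. Hence
\[
\max(D_A') + \max(D_B') + 1 = d + 1 + \bigl(\max(D_A') - \min(D_A')\bigr) < d + 1 + \bigl(\max(D_A) - \min(D_A)\bigr) = \max(D_A) + \max(D_B) + 1,
\]
contradicting the optimality of $(D_A, D_B)$. Therefore $s \in D_A$.

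The only genuinely non-routine step is the opening observation that the matdot condition forces $D_B = d - D_A$ exactly; everything afterward is bookkeeping, the main care point being to confirm that the modified pair $(D_A', D_B')$ is still a legitimate solution for the same $d$ so that it is an admissible competitor in the optimization.
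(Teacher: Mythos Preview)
Your proof is correct and follows essentially the same exchange argument as the paper: replace $\max(D_A)$ by $s$, set $D_B' = d - D_A'$, and compare recovery thresholds via the identity $\max(D_A) + \max(D_B) = d + (\max(D_A) - \min(D_A))$. If anything, your write-up is more complete than the paper's, since you explicitly derive the structural fact $D_B = d - D_A$ and verify that $(D_A', D_B')$ is again an admissible solution, both of which the paper leaves implicit.
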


\begin{proof}
    Obviously, if $s \in D_A$, then $s \leq_S d$. Let us suppose that $s \leq_S d$ and $s \notin D_A$. Then we can define the following solution
    \begin{equation*}
        \begin{split}
        &D_A^\prime  := (D_A \setminus{\max (D_A)}) \cup \{s\},\\
        &D_B^\prime  := d - D_A^\prime ,
        \end{split}
    \end{equation*}
    which satisfies
    \begin{equation*}
        \begin{split}
            \max(D_A^\prime ) + \max(D_B^\prime ) &= d + \max(D_A^\prime ) - \min(D_A^\prime ) \\
            &< d + \max(D_A) - \min(D_A) \\
            &= \max(D_A) + \max(D_B),
        \end{split}
    \end{equation*}
    contradicting the optimality of $D_A$ and $D_B$.
\end{proof}

\begin{lemma} \label{lemma:AG_matdot_optimal2}
    Let $D_A$ and $D_B$ be an optimal solution. Then
    \begin{enumerate}
        \item $\min(D_A) \leq c(S)$.
        \item $\max(D_A) \geq d - c(S)$.
    \end{enumerate}
\end{lemma}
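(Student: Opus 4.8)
The plan is to prove part 1 directly by a shift argument and then obtain part 2 from part 1 via the symmetry of the AG matdot code problem under swapping the two sets. First I would record the structural fact that makes everything transparent: in any solution $(D_A, D_B)$ with witness $d$, since $|D_A| = |D_B| = m$ and for each $a \in D_A$ there is at most one $b \in D_B$ with $a + b = d$, having exactly $m$ such pairs forces $d - D_A \subseteq D_B$, hence $D_B = d - D_A := \{d - a : a \in D_A\}$ by cardinality. In particular $D_B \subseteq S$ is just a reflection of $D_A$, and the recovery threshold equals $\max(D_A) + \max(D_B) + 1 = d + \bigl(\max(D_A) - \min(D_A)\bigr) + 1$.

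For part 1, I would argue by contradiction. Suppose $\min(D_A) > c(S)$, i.e. every element of $D_A$ is at least $c(S) + 1$. Then $D_A' := D_A - 1 = \{a - 1 : a \in D_A\}$ still lies in $S$, because $[c(S), \infty) \subseteq S$, and clearly $|D_A'| = m$. Put $d' := d - 1$; then $D_B' := d' - D_A' = d - D_A = D_B \subseteq S$, so $(D_A', D_B')$ is again a solution to the AG matdot code problem with witness $d'$ (the count of representing pairs is again exactly $m$, for the same reason as above). Its recovery threshold is $d' + \bigl(\max(D_A') - \min(D_A')\bigr) + 1 = (d-1) + \bigl(\max(D_A) - \min(D_A)\bigr) + 1$, strictly smaller than that of $(D_A, D_B)$, contradicting optimality. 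Hence $\min(D_A) \leq c(S)$.

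For part 2, I would note that the AG matdot code problem and its recovery threshold are invariant under exchanging $D_A$ and $D_B$ (with the same witness $d$), so $(D_B, D_A)$ is also an optimal solution; applying part 1 to it yields $\min(D_B) \leq c(S)$. Since $D_A = d - D_B$, this gives $\max(D_A) = d - \min(D_B) \geq d - c(S)$, which is the claim.

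The only step needing care is checking that the shifted pair $(D_A - 1, D_B)$ really is a valid solution: that $D_A - 1 \subseteq S$ (this is precisely where $\min(D_A) \geq c(S) + 1$ combined with $[c(S),\infty) \subseteq S$ is used) and that $d - 1$ remains a legitimate witness with exactly $m$ representing pairs. Everything else is bookkeeping, and I do not expect to need Lemma \ref{lemma:AG_matdot_optimal1}, though it offers an alternative route through the interval-closure description of optimal $D_A$.
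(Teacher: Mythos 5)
Your proof is correct and takes essentially the same approach as the paper: for part~1 you shift $D_A$ down by one when $\min(D_A) > c(S)$ and observe the witness $d-1$ gives the same $D_B$ with a strictly smaller threshold, and for part~2 you swap $D_A$ and $D_B$ and invoke part~1. Your preliminary observation that $D_B = d - D_A$ (forced by the ``exactly $m$ pairs'' condition together with the cardinalities) is used implicitly in the paper's proof and you are right to make it explicit.
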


\begin{proof}
    We prove them separately:
    \begin{enumerate}
        \item Suppose that $\min(D_A) > c(S)$. Then we can define the solution
            \begin{equation*}
                \begin{split}
                &D_A^\prime  := D_A - 1,\\
                &D_B^\prime  := d - 1 - D_A^\prime  = D_B,
                \end{split}
            \end{equation*}
            which satifies
            \begin{equation*}
                \begin{split}
                    \max(D_A^\prime ) + \max(D_B^\prime ) &= d - 1 + \max(D_A^\prime ) - \min(D_A^\prime ) \\
                    &= d - 1 + \max(D_A) - \min(D_A)\\
                    &< d + \max(D_A) - \min(D_A)\\
                    &= \max(D_A) + \max(D_B),
                \end{split}
            \end{equation*}
            contradicting the optimality of $D_A$ and $D_B$.

        \item If $\max(D_A) < d - c(S)$, then $\min(D_B) > c(S)$. Take the solution
            \begin{equation*}
                \begin{split}
                &D_A^\prime  := D_B,\\
                &D_B^\prime  := D_A,
                \end{split}
            \end{equation*}
            and the result follows from the first part of the lemma.
    \end{enumerate}
\end{proof}

\begin{definition}
    Let $\delta \in [0, c(S)] \cap S$. Define
    \begin{equation*}
        \begin{split}
        n(\delta) := |[\delta, c(S) - 1] \cap S|,
        \end{split}
    \end{equation*}
    i.e., the number of nontrivial elements of the semigroup greater than or equal to $\delta$.
\end{definition}

We now are ready to prove the main result of this section. Theorem \ref{theorem:AG_matdot_optimal} below gives an explicit description of optimal solutions to the AG matdot code problem with some minor restriction on the size of $m$.

\begin{theorem} \label{theorem:AG_matdot_optimal}
    Let $m \geq 2 c(S)$. Consider an element $\delta \in [0, c(S)] \cap S$ that maximizes $\delta + 2 n(\delta)$. Define $d := m - 1 + 2 c(S) - 2 n(\delta)$. Then the solution
    \begin{equation*}
        \begin{split}
        D_A =  D_B := ([\delta, c(S) - 1] \cap S) \cup ([c(S), d - c(S)]) \cup (d - [\delta, c(S) - 1] \cap S),
        \end{split}
    \end{equation*}
    is optimal, and with recovery threshold $ 2(d -\delta)+1
 $.
\end{theorem}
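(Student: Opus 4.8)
Throughout, set $\Delta := \max\{\delta' + 2n(\delta') : \delta' \in [0,c(S)]\cap S\}$, which is attained at the $\delta$ of the statement, and write $V(d) := \{s\in S : d-s\in S\}$. First I would observe that $m\ge 2c(S)$ forces $d\ge 2c(S)$ here: since $n(\delta)\le n(S)\le c(S)-1$ (the last step because $c(S)-1$ is always a gap when $c(S)\ge 1$), we get $d-2c(S)=m-1-2n(\delta)\ge 1$. Put $T:=[\delta,c(S)-1]\cap S$, so $|T|=n(\delta)$. Because $c(S)$ is the conductor, $[c(S),d-c(S)]\subseteq S$, and every element of $d-T$ is $\ge d-c(S)+1\ge c(S)$, hence also in $S$; the three blocks of $D_A=D_B$ lie respectively in $[\delta,c(S)-1]$, $[c(S),d-c(S)]$ and $[d-c(S)+1,d-\delta]$, which are pairwise disjoint since $d\ge 2c(S)$, so $|D_A|=n(\delta)+(d-2c(S)+1)+n(\delta)=m$ by the definition of $d$. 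The map $s\mapsto d-s$ sends each block to a block ($[c(S),d-c(S)]$ to itself, $T\leftrightarrow d-T$), so $D_A=d-D_A$; hence the $m$ pairs $(a,d-a)$ with $a\in D_A$ are exactly the pairs of $D_A\times D_B$ summing to $d$, and $(D_A,D_B)$ is a solution in the sense of Definition~\ref{definition:AG_matdot_codes}. Finally $\max(D_A)=d-\delta$ (attained on $d-T$, or on $[c(S),d-c(S)]$ when $\delta=c(S)$, giving the same value), so the recovery threshold is $2(d-\delta)+1=2m+4c(S)-1-2\Delta$.

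\textbf{The lower bound: structural facts.}
Let $(D_A,D_B)$ be any solution with associated integer $d$, and put $\mu:=\min(D_A)$, $M:=\max(D_A)$. Since $|D_A|=m$ and exactly $m$ pairs of $D_A\times D_B$ sum to $d$, each $a\in D_A$ occurs in exactly one such pair, so $D_B=d-D_A$; in particular $D_A\subseteq V(d)$ and the recovery threshold equals $d+M-\mu+1$. Now assume the solution is optimal. Then $\mu\le c(S)$ and $M\ge d-c(S)$ by Lemma~\ref{lemma:AG_matdot_optimal2}, and moreover $d\ge 2c(S)$: from $m=|D_A|\le|V(d)|\le d+1$ we get $d\ge m-1\ge 2c(S)-1$, while $d=2c(S)-1$ would give $|V(d)|=2n(S)<2c(S)\le m$, a contradiction. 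Set $k:=|[d-M,c(S)-1]\cap S|$ and let $\sigma_k$ be the $k$-th largest element of $[0,c(S)-1]\cap S$, with the convention $\sigma_0:=c(S)$. Since $M\in V(d)$ we have $d-M\in S$, and $M\ge d-c(S)$ gives $d-M\le c(S)$; hence $d-M$ is the least element of $[d-M,c(S)-1]\cap S$, i.e.\ $d-M=\sigma_k$, so $M=d-\sigma_k$.

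\textbf{The lower bound: conclusion.}
Because $d\ge 2c(S)$, splitting $[\mu,M]$ at $c(S)$ and $d-c(S)$ decomposes $V(d)\cap[\mu,M]$ into the set $[\mu,c(S)-1]\cap S$ (size $n(\mu)$), the full interval $[c(S),d-c(S)]$, and the image of $[d-M,c(S)-1]\cap S$ under $s\mapsto d-s$ (size $k$); therefore
\begin{equation*}
m=|D_A|\le |V(d)\cap[\mu,M]|=n(\mu)+(d-2c(S)+1)+k,
\end{equation*}
so $d\ge m+2c(S)-1-n(\mu)-k$. Plugging this and $M=d-\sigma_k$ into the recovery threshold,
\begin{equation*}
d+M-\mu+1=2d-\sigma_k-\mu+1\ge 2m+4c(S)-1-\bigl(\mu+2n(\mu)\bigr)-\bigl(\sigma_k+2k\bigr).
\end{equation*}
Finally $\mu\in[0,c(S)]\cap S$, and $\sigma_k\in[0,c(S)]\cap S$ with $n(\sigma_k)=k$, so both $\mu+2n(\mu)$ and $\sigma_k+2k=\sigma_k+2n(\sigma_k)$ are of the form $\delta'+2n(\delta')$ with $\delta'\in[0,c(S)]\cap S$, hence at most $\Delta$. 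Thus the recovery threshold of any optimal solution is $\ge 2m+4c(S)-1-2\Delta=2(d_0-\delta)+1$ where $d_0$ is the value of $d$ in the construction; since the construction attains this, it is optimal.

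\textbf{Main obstacle.}
I expect the delicate point to be the combinatorial bookkeeping in the lower bound, namely the exact block decomposition of $V(d)\cap[\mu,M]$ (valid only once $d\ge 2c(S)$ is secured) and, above all, recognising the ``duality'' $\sigma_k+2k=\sigma_k+2n(\sigma_k)$ which lets both penalty terms be controlled by the single quantity $\Delta$ that the construction maximises. Verifying $d\ge 2c(S)$ for optimal solutions and checking the boundary cases $\mu=c(S)$ (so $T=\emptyset$), $k=0$, and $S=\mathbb{N}$ are routine but require care.
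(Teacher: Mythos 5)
Your proof is correct, and it arrives at the same penalty term $2\Delta$ as the paper, but the route to the lower bound is genuinely different in one respect. The paper invokes Lemma~\ref{lemma:AG_matdot_optimal1} to pin down the \emph{exact} shape of any optimal $D_A'$ as a three-block union, from which the identity $m = n(\delta_1)+n(\delta_2)+d'-2c(S)+1$ and hence an \emph{equality} for $\max(D_A')+\max(D_B')$ in terms of $\delta_1=\min(D_A')$ and $\delta_2=d'-\max(D_A')$ follow; it then compares this equality against the construction's value. You instead only use the soft containment $D_A\subseteq V(d)\cap[\mu,M]$ together with $|D_A|=m$ to get a one-sided counting bound on $d$, which makes Lemma~\ref{lemma:AG_matdot_optimal1} unnecessary (you still rely on Lemma~\ref{lemma:AG_matdot_optimal2} to locate $\mu$ and $M$). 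The price is the extra bookkeeping with $\sigma_k$ and the observation $k=n(\sigma_k)$, which plays the role the paper's $\delta_2$ plays directly. Your approach is slightly more self-contained and makes the inequality $d\ge 2c(S)$ explicit (the paper uses it implicitly in the partition step and only secures it a posteriori from the counting identity), while the paper's approach yields the sharper structural statement that an optimum must be of the three-block form. Both close the argument with the same reduction to maximizing $\delta'+2n(\delta')$, so the two proofs are equivalent in substance.

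Two small points worth tightening in your write-up: your justification $n(S)\le c(S)-1$ (via $c(S)-1\notin S$) silently assumes $c(S)\ge1$; the case $S=\mathbb{N}$ should be handled separately (it is trivial, as you note at the end). Also, in the feasibility part you should say a word on why $D_A\subseteq S$: the block $d-T$ lies in $[d-c(S)+1,\infty)\subseteq[c(S)+1,\infty)\subseteq S$ precisely because $d\ge 2c(S)$, which you do establish just before, so the order of the two observations should be made explicit.
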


\begin{proof}
    Consider an optimal solution $D_A^\prime $ and $D_B^\prime $ with $m \geq 2 c(S)$ for $d^\prime $. We can partition $D_A^\prime $ as
    \begin{equation*}
        \begin{split}
        D_A^\prime  = (D_A^\prime  \cap [0,c(S)-1]) \cup (D_A^\prime  \cap [c(S), d^\prime  - c(S)]) \cup (D_A^\prime  \cap [d^\prime  - c(S) + 1, d^\prime ]).
        \end{split}
    \end{equation*}
    Observe the following facts:
    \begin{enumerate}
        \item If $s \in S \cap [0, c(S) - 1]$, then $s \leq_S d^\prime  $ since
            \begin{equation*}
                \begin{split}
                d^\prime  - s \geq m - 1 - s \geq m - 1 - c(S) + 1 = m - c(S) \geq c(S).
                \end{split}
            \end{equation*}
        \item If $s \in [c(S), d^\prime  - c(S)]$, then $s \leq_S d^\prime  $ since
            \begin{equation*}
                \begin{split}
                d^\prime  - s \geq d^\prime  - d^\prime  + c(S) = c(S).
                \end{split}
            \end{equation*}
        \item If $s \in [d^\prime  - c(S)+1, d^\prime ]$, then $s \leq_S d^\prime $ if and only if $d^\prime  - s \in S$ (this is just the definition of $\leq_S$).
    \end{enumerate}
    Applying Lemma \ref{lemma:AG_matdot_optimal1} and Lemma \ref{lemma:AG_matdot_optimal2}, we conclude that $D_A^\prime $ has the form
    \begin{equation*}
        \begin{split}
            D_A^\prime  := ([\min (D_A^\prime ), c(S) - 1] \cap S) \cup [c(S), d^\prime  - c(S)] \cup (d^\prime  - ([d^\prime  - \max(D_A^\prime ), c(S) - 1] \cap S)).
        \end{split}
    \end{equation*}
    Let us define
    \begin{equation*}
        \begin{split}
        &\delta_1 := \min(D_A^\prime ),\\
        &\delta_2 := d^\prime  - \max(D_A^\prime ).
        \end{split}
    \end{equation*}
    Now we can count elements to deduce that
    \begin{equation*}
        \begin{split}
            m = |D_A^\prime | = n(\delta_1) + (d^\prime  - c(S) - c(S) + 1) + n(\delta_2) = n(\delta_1) + n(\delta_2) + d^\prime  - 2c(S) + 1,
        \end{split}
    \end{equation*}
    and, hence $d^\prime  = m - 1 + 2c(S) - n(\delta_1) - n(\delta_2)$. Therefore
    \begin{equation*}
        \begin{split}
            \max(D_A^\prime ) + \max(D_B^\prime ) &= d^\prime  + \max(D_A^\prime ) - \min(D_A^\prime )\\
            &= d^\prime  + d^\prime  - \delta_2 - \delta_1\\
            &= 2(m - 1 + 2c(S)) - (\delta_1 + \delta_2 + 2n(\delta_1) + 2n(\delta_2)).
        \end{split}
    \end{equation*}

    Now consider the sets of the statement, $D_A$ and $D_B$. Clearly this is a solution. Also,
    \begin{equation*}
        \max(D_A) + \max(D_B) = 2(d - \delta) = 2(m - 1 + 2c(S) - 2n(\delta) - \delta).
    \end{equation*}
    Because of the definition of $\delta$, we have that $\max(D_A) + \max(D_B) \leq \max(D_A^\prime ) + \max(D_A^\prime )$, concluding that $D_A$ and $D_B$ is an optimal solution.
\end{proof}

\begin{remark}
    Observe that the number $\delta$ defined in Proposition \ref{theorem:AG_matdot_optimal} is independent of $m$ (as long as $m \geq 2 c(S)$), so we only need to compute it once for the chosen semigroup. It is an invariant of the semigroup as $g(S)$, $n(S)$ and $m(S)$ are.
\end{remark}

\begin{remark}
    Observe that if $S = \mathbb{N}$, the solution defined in Theorem \ref{theorem:AG_matdot_optimal} is $D_A = D_B = [0, m-1]$, the same of Construction \ref{construction:optimal_matdot}.
\end{remark}

\begin{definition}
    Define the map $\Delta$ given by
    \begin{equation*}
        \begin{split}
        \Delta: S \cap [0, c(S)] &\rightarrow \mathbb{N}\\
        \delta &\mapsto \delta + 2 n(\delta).
        \end{split}
    \end{equation*}
\end{definition}

Because of Proposition \ref{theorem:AG_matdot_optimal}, it makes sense to consider $\Delta$, since studying its maximums leads to finding optimal solutions. Lemma \ref{lemma:Delta_properties} summarizes some properties of $\Delta$:

\begin{lemma} \label{lemma:Delta_properties}
    $\Delta$ satisfies the following:
    \begin{enumerate}
        \item $\Delta(0) = 2 n(S)$.
        \item $\Delta(c(S)) = c(S)$.
        \item If $s, s^\prime  \in S \cap [0, c(S)]$ are such that $s = \max \{t \in S \such t < s^\prime  \}$, then:
            \begin{equation*}
                \begin{split}
                    s^\prime - s = 1      &\implies \Delta(s^\prime) < \Delta(s),\\
                    s^\prime - s = 2      &\implies \Delta(s^\prime) = \Delta(s),\\
                    s^\prime - s \geq 3   &\implies \Delta(s^\prime) > \Delta(s).
                \end{split}
            \end{equation*}
    \end{enumerate}
\end{lemma}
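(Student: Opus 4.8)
The plan is to verify each of the three items by directly unwinding the definitions of $\Delta$, $n(\delta)$ and the conductor. For item 1, I would note that $n(0) = |[0, c(S)-1] \cap S|$, which by Definition \ref{definition:gaps} is exactly $n(S)$, so $\Delta(0) = 0 + 2n(S) = 2n(S)$. For item 2, observe that $[c(S), c(S)-1] = \emptyset$, hence $n(c(S)) = 0$ and $\Delta(c(S)) = c(S) + 0 = c(S)$.

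The substance is in item 3. Here I would fix consecutive semigroup elements $s < s'$ in $S \cap [0, c(S)]$, meaning $s = \max\{t \in S : t < s'\}$, so the integers $s+1, s+2, \ldots, s'-1$ are all gaps of $S$. The key observation is that $n(\cdot)$ counts semigroup elements in a shrinking interval: precisely, $n(s) - n(s') = |[s, s'-1] \cap S|$. Since $s \in S$ but none of $s+1, \ldots, s'-1$ lie in $S$, this difference equals exactly $1$. (One should be mildly careful about the case $s' = c(S)$, where $n(s') = 0$ by item 2 and $n(s) = |[s, c(S)-1] \cap S| = 1$ since $s$ is the only semigroup element in that range; the formula still gives $1$.) Therefore
\begin{equation*}
    \Delta(s') - \Delta(s) = (s' - s) + 2\bigl(n(s') - n(s)\bigr) = (s' - s) - 2.
\end{equation*}
Now the three cases are immediate: $s' - s = 1$ gives $\Delta(s') - \Delta(s) = -1 < 0$; $s' - s = 2$ gives $\Delta(s') - \Delta(s) = 0$; and $s' - s \geq 3$ gives $\Delta(s') - \Delta(s) \geq 1 > 0$.

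I do not expect any real obstacle here; the only place requiring a little care is the bookkeeping of which endpoint is included in the interval defining $n(\delta)$ and the boundary case $s' = c(S)$, but both resolve cleanly once one writes $n(s) - n(s') = |[s, s'-1] \cap S|$ and uses that $s$ is a semigroup element while every integer strictly between $s$ and $s'$ is a gap. It is worth remarking, as a sanity check consistent with Theorem \ref{theorem:AG_matdot_optimal}, that this computation shows $\Delta$ is "almost monotone decreasing under steps of size $1$" but can only increase across large gaps, which is precisely the behaviour one would want when searching for the $\delta$ maximizing $\Delta$.
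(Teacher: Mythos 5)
Your proof is correct and fills in exactly the kind of routine verification the paper has in mind: the paper's own proof is the single line ``Trivial by the definition of $\Delta$,'' and your unwinding — $n(0)=n(S)$, $n(c(S))=0$, and $n(s)-n(s')=\lvert [s,s'-1]\cap S\rvert = 1$ for consecutive semigroup elements, giving $\Delta(s')-\Delta(s) = (s'-s)-2$ — is precisely the computation the authors are waving at. No discrepancy in approach; you have simply made explicit what the paper declines to write out, including the correct handling of the boundary case $s'=c(S)$.
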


\begin{proof}
    Trivial by the definition of $\Delta$.
\end{proof}

Even though the domain of $\Delta$ is finite and we can theoretically use brute force to find its maximum, we bound in Proposition \ref{proposition:Delta_bound} the set where it can be found. We need the $\phi$ map, which can be found implicitly defined in \cite{froberg}, for example.

\begin{definition}
    Let $S$ be a numerical semigroup. We define the map $\phi$ as
    \begin{equation*}
        \begin{split}
            \phi : [0, c(S) - 1] &\to [0, c(S) - 1]\\
            x &\mapsto c(S) - 1 - x
        \end{split}
    \end{equation*}
\end{definition}

\begin{lemma}
    Let $x, y \in [0, c(S) - 1]$. Then $\phi$ satisfies the following:
    \begin{itemize}
        \item $\phi$ is involutive, i.e., $\phi^2 = \id$.
        \item $x \leq_S y$ if and only if $\phi(x) \leq_S \phi(y)$.
        \item If $x \in S$, then $\phi(x) \notin S$. The converse is true if and only if $S$ is symmetric i.e. $g(S) = n(S)$.
    \end{itemize}
\end{lemma}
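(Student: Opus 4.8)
The first item is immediate: $\phi(\phi(x)) = c(S)-1-(c(S)-1-x) = x$ for every $x\in[0,c(S)-1]$, so $\phi$ is involutive. For the second item, the plan is to reduce everything to the elementary identity $\phi(x)-\phi(y) = y-x$, which follows by substituting the definition of $\phi$. Feeding this into Definition~\ref{definition:s_order}, the membership of $y-x$ in $S$ is transferred between the pair $(x,y)$ and the pair $(\phi(x),\phi(y))$; combined with the involutivity just established, this yields the asserted equivalence $x\leq_S y \iff \phi(x)\leq_S\phi(y)$. The only point requiring care is reading off the direction of the relation correctly from the convention $a\leq_S b\iff b-a\in S$.

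The third item is the substantive one. For the forward implication I would argue by contradiction using the defining property of the conductor. If $S=\mathbb{N}$ then $[0,c(S)-1]$ is empty and there is nothing to prove, so assume $c(S)\geq 2$; note that $c(S)-1\notin S$, since otherwise $[c(S)-1,\infty)\subseteq S$ would contradict the minimality of $c(S)$ in Definition~\ref{definition:gaps}. Now take $x\in S\cap[0,c(S)-1]$ and suppose that $\phi(x)=c(S)-1-x$ also lies in $S$. Then closure of $S$ under addition gives $c(S)-1 = x+\phi(x)\in S$, a contradiction. Hence $\phi(x)\notin S$, as claimed.

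For the equivalence with symmetry, the key is a counting argument. Since every gap of $S$ lies below $c(S)$, the set $[0,c(S)-1]\setminus S$ is exactly the set of gaps and has $g(S)$ elements, while $S\cap[0,c(S)-1]$ has $n(S)=c(S)-g(S)$ elements, and $\phi$ is a bijection of $[0,c(S)-1]$. The forward implication says $\phi$ maps $S\cap[0,c(S)-1]$ injectively into $[0,c(S)-1]\setminus S$. If $g(S)=n(S)$, these two sets are equinumerous, so this injection is onto; applying $\phi$ once more and using involutivity shows that $\phi$ sends every gap into $S$, which is precisely the converse statement. Conversely, if the converse holds, then together with the forward implication $\phi$ interchanges $S\cap[0,c(S)-1]$ and $[0,c(S)-1]\setminus S$, forcing $n(S)=g(S)$, i.e.\ $S$ is symmetric. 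I do not anticipate a serious obstacle: the only delicate points are the degenerate case $S=\mathbb{N}$, pinning down the direction of $\leq_S$ in the second item, and invoking the exact fact $c(S)-1\notin S$ rather than a weaker one; the cardinality bookkeeping in the last item is then routine.
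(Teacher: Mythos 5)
Your treatment of the first and third items is correct and fleshes out what the paper dismisses as trivial: involutivity is a one-line computation, the forward half of the third item follows from $x+\phi(x)=c(S)-1\notin S$ together with closure of $S$ under addition (and your handling of the degenerate case $S=\mathbb{N}$ and of the fact $c(S)-1\notin S$ is exactly right), and the counting argument via the bijection $\phi$ of $[0,c(S)-1]$ with $n(S)+g(S)=c(S)$ correctly settles the equivalence with symmetry.

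The second item, however, contains a genuine gap. From your identity $\phi(x)-\phi(y)=y-x$ and Definition \ref{definition:s_order} ($a\leq_S b\iff b-a\in S$), what actually follows is $x\leq_S y\iff \phi(y)\leq_S\phi(x)$: the map $\phi$ \emph{reverses} the partial order $\leq_S$, and involutivity cannot convert this reversal into preservation. The order-preserving equivalence you conclude (which is also how the bullet is literally phrased in the lemma) is false in general: take $S=\langle 2,5\rangle$, so $c(S)=4$ and $\phi(x)=3-x$; then $0\leq_S 2$ because $2\in S$, yet $\phi(0)=3\not\leq_S\phi(2)=1$ since $1-3\notin S$. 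So the step ``this yields the asserted equivalence $x\leq_S y\iff\phi(x)\leq_S\phi(y)$'' does not follow from your identity; the correct conclusion of your computation is the order-reversing statement $x\leq_S y\iff\phi(y)\leq_S\phi(x)$, which is the standard anti-isomorphism property and presumably what the lemma intends. You even flagged ``reading off the direction'' as the delicate point, and that is precisely where the argument goes wrong: as written, you prove a different (and correct) statement while claiming the one in the text.
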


\begin{proof}
    Trivial by the definition of $\phi$.
\end{proof}

\begin{proposition} \label{proposition:Delta_bound}
    The map $\Delta$ reaches its maximum in some $\delta \geq c(S)/2$.
\end{proposition}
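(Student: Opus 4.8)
The plan is to show that any $\delta \in S \cap [0, c(S)-1]$ with $\delta < c(S)/2$ cannot be the unique maximizer of $\Delta$, by exhibiting a point where $\Delta$ is at least as large and which lies in $[c(S)/2, c(S)]$. The natural candidate to compare $\delta$ with is its image under the involution $\phi$, or a semigroup element near it. First I would recall the two basic handles on $\Delta$ we already have: part (3) of Lemma~\ref{lemma:Delta_properties}, which says that $\Delta$ is "roughly non-increasing with correction by gap-density" as $\delta$ increases through $S$ — precisely, a step by $1$ strictly decreases $\Delta$, a step by $2$ leaves it unchanged, and a step by $\geq 3$ strictly increases it; and the counting identity $n(\delta) = n(S) - |S \cap [0,\delta-1]|$, so that $\Delta(\delta) = \delta + 2n(S) - 2|S \cap [0,\delta-1]|$.

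Rewriting via that identity, $\Delta(\delta) = 2n(S) + \bigl(\delta - 2\,|S \cap [0,\delta-1]|\bigr)$, so maximizing $\Delta$ over $S \cap [0,c(S)]$ is the same as maximizing $\psi(\delta) := \delta - 2|S \cap [0,\delta-1]|$. The key observation is a symmetry/duality for $\psi$ coming from $\phi$: for $\delta \in S$, the number of gaps of $S$ in $[0,\delta-1]$ equals $\delta - |S\cap[0,\delta-1]|$, and one can relate $|S \cap [0,\delta-1]|$ to $|S \cap [0, c(S)-1-\delta']|$ for a suitable companion $\delta'$ using the fact that $\phi$ swaps $S$ and its complement inside $[0,c(S)-1]$ (with equality governed by symmetry of $S$). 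Concretely, I would set $\delta^* := $ the smallest element of $S$ that is $\geq \phi(\delta) = c(S)-1-\delta$ (when $\delta < c(S)/2$ this companion satisfies $\delta^* \geq c(S)/2$, up to a possible off-by-one that I would absorb), and compute $\Delta(\delta^*) - \Delta(\delta)$ directly from $\psi$. The gaps of $S$ lying in the interval between $\delta$ and $\delta^*$ are, via $\phi$, in bijection with a set of elements of $S$ outside that range, and a short count shows $\Delta(\delta^*) \geq \Delta(\delta)$; combining with Lemma~\ref{lemma:Delta_properties}(3) to clean up the endpoints then gives the claim that some maximizer sits in $[c(S)/2, c(S)]$.

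The main obstacle I anticipate is bookkeeping the parity/off-by-one issues: $\phi$ is defined on $[0,c(S)-1]$, the companion of a semigroup element need not itself be a semigroup element, and the boundary value $c(S)/2$ may be a half-integer, so the inequality "$\delta^* \geq c(S)/2$" needs care (and the case $S$ symmetric, where $\phi$ is a bijection between $S$ and its complement, behaves slightly differently from the non-symmetric case). I would handle this by working entirely with $\psi(\delta) = \delta - 2|S\cap[0,\delta-1]|$, proving the clean statement "$\psi(\delta) \leq \psi(\delta^*)$ whenever $\delta < c(S)/2$ and $\delta^*$ is the least element of $S$ with $\delta^* \geq c(S) - \delta$", which sidesteps the need for $\phi$ to restrict to $S$; the $\phi$-duality is then only used as the conceptual source of the bijection on gaps, while the actual proof is an elementary interval count. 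A fallback, if the duality argument gets unwieldy, is a purely monotone argument: start from any maximizer $\delta_0$; if $\delta_0 < c(S)/2$, then the interval $[\delta_0, c(S)-\delta_0]$ has length $> c(S)/2 \geq g(S)$-ish, so it must contain a "gap cluster" or a step of size $\geq 3$ in $S$, and Lemma~\ref{lemma:Delta_properties}(3) lets one push $\delta_0$ upward without decreasing $\Delta$ until it crosses $c(S)/2$.
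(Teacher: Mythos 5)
Your primary plan is the paper's proof: suppose a maximizer $\delta < c(S)/2$, compare it to the least element of $S$ past $\phi(\delta) = c(S)-1-\delta$, and show $\Delta$ does not decrease. Since $\delta\in S$ forces $\phi(\delta)\notin S$, your $\delta^*$ and the paper's $\delta' := \min\{s\in S : s > \phi(\delta)\}$ coincide, and the worry about an off-by-one is unfounded: $\delta' > \phi(\delta) = c(S)-1-\delta$ gives $\delta' \geq c(S)-\delta > c(S)/2$ outright.

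Two cautions on the parts you left loose. First, the bijection you gesture at is slightly misaimed: the count does not match gaps of $S$ in $[\delta,\delta^*]$ with semigroup elements \emph{outside} that range. What is actually used is that $\phi$ maps the interval $[\delta,\phi(\delta)]$ onto itself and sends elements of $S$ to gaps, so at most half of $[\delta,\phi(\delta)]$ lies in $S$, giving $n(\delta)-n(\delta') = |S\cap[\delta,\phi(\delta)]| \leq \tfrac{1}{2}\left(c(S)-2\delta\right)$ (the piece $S\cap(\phi(\delta),\delta')$ is empty by choice of $\delta'$); together with $\delta'-\delta \geq c(S)-2\delta$ this yields $\Delta(\delta')-\Delta(\delta) = (\delta'-\delta)-2\left(n(\delta)-n(\delta')\right) \geq 0$. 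Second, your fallback argument does not work as stated: Lemma~\ref{lemma:Delta_properties}(3) only controls one step at a time, and across a run of consecutive semigroup elements every step strictly decreases $\Delta$, so a greedy local "push upward" has no guarantee of recovering the loss before reaching $c(S)/2$. The global $\phi$-pairing is not merely a conceptual source here; it is what closes the argument, so you should keep the main route and drop the fallback.
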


\begin{proof}
    Let us suppose that $\Delta$ reaches its maximum in $\delta < c(S)/2$. Consider
    \begin{equation*}
        \begin{split}
        \delta^\prime := \min \{ s \in S \such s > \phi(\delta) \},
        \end{split}
    \end{equation*}
    and the number
    \begin{equation*}
        \begin{split}
        \Delta(\delta^\prime) - \Delta(\delta) = (\delta^\prime - \delta) - 2 (n(\delta) - n(\delta^\prime)).
        \end{split}
    \end{equation*}
    First,
    \begin{equation*}
        \begin{split}
        \delta^\prime - \delta \geq \phi(\delta) + 1 - \delta = c(S) - 2 \delta.
        \end{split}
    \end{equation*}
    Second,
    \begin{equation*}
        \begin{split}
        n(\delta) - n(\delta^\prime) = |S \cap [\delta, \delta^\prime - 1]| = |S \cap [\delta, \phi(\delta)]| + |S \cap [\phi(\delta) + 1, \delta^\prime - 1]|.
        \end{split}
    \end{equation*}
    Due to the construction of $\delta^\prime$, the set $S \cap [\phi(\delta)+1, \delta^\prime - 1]$ is empty, so
    \begin{equation*}
        \begin{split}
        n(\delta) + n(\delta^\prime) = |S \cap [\delta, \phi(\delta)]| \leq \frac{\phi(\delta) - \delta + 1}{2} = \frac{c(S)}{2} - \delta.
        \end{split}
    \end{equation*}
    Using these two inequalities we get
    \begin{equation*}
        \begin{split}
        \Delta(\delta^\prime) - \Delta(\delta) \geq c(S) - 2\delta - 2\left( \frac{c(S)}{2} - \delta \right) = 0,
        \end{split}
    \end{equation*}
    concluding that $\Delta(\delta^\prime) \geq \Delta(\delta)$, so $\Delta$ reaches its maximum in $\delta^\prime > c(S)/2$.
\end{proof}

The remainder of the section is dedicated to studying specific families of semigroups and computing where $\Delta$ reaches its maximum.

\begin{definition}
    We say that a numerical semigroup is sparse if it has no consecutive elements lower than the conductor.
\end{definition}

    As an example of sparse semigroups highly related to coding theory, we can consider the semigroups associated to the second tower of function fields of Garc{\'i}a-Stichtenoth \cite{garcia_stichtenoth_semigroup}. For more information about sparse semigroups we refer to \cite{sparse_numerical_semigroups}. Computing the maximum of the $\Delta$ map is easy for sparse semigroups as Proposition \ref{proposition:Delta_sparse} shows.

\begin{proposition} \label{proposition:Delta_sparse}
    If $S$ is sparse, then $\Delta$ reaches its maximum at $\delta = c(S)$.
\end{proposition}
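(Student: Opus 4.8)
The plan is to show that $\Delta$ is weakly increasing as we run through the elements of $S\cap[0,c(S)]$ in increasing order, so that its maximum must be attained at the largest such element, which is $c(S)$. The engine for this is part (3) of Lemma~\ref{lemma:Delta_properties}: if $s,s'\in S\cap[0,c(S)]$ are consecutive elements of the semigroup, meaning $s=\max\{t\in S : t<s'\}$, then $\Delta(s')\geq\Delta(s)$ as soon as $s'-s\geq 2$ (with equality exactly when $s'-s=2$). So the whole proof reduces to checking that, for a sparse $S$, all gaps between consecutive elements of $S$ up to and including $c(S)$ are at least $2$.

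First I would enumerate $S\cap[0,c(S)]=\{s_0<s_1<\cdots<s_k\}$, so that $s_0=0$ and $s_k=c(S)$. If $S=\mathbb{N}$ then $c(S)=0$, $k=0$, and the statement is trivial, so I assume $c(S)\geq 1$. Next I would verify $s_{i+1}-s_i\geq 2$ for every $i=0,\dots,k-1$. For $i\leq k-2$, both $s_i$ and $s_{i+1}$ lie strictly below $c(S)$, so the sparseness hypothesis (no two consecutive integers both lying in $S$ below the conductor) gives $s_{i+1}-s_i\geq 2$ directly. The one gap not covered by the definition of sparse is the last one, $s_k-s_{k-1}=c(S)-s_{k-1}$; here I would invoke the standard fact that $c(S)-1\notin S$ when $c(S)\geq 1$ (otherwise $[c(S)-1,\infty)\subseteq S$, contradicting minimality of the conductor), so $s_{k-1}\leq c(S)-2$ and this gap is also at least $2$. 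Note also that for each $i$ the pair $(s_i,s_{i+1})$ does satisfy $s_i=\max\{t\in S : t<s_{i+1}\}$, since any $t\in S$ with $s_i<t<s_{i+1}\leq c(S)$ would itself lie in $S\cap[0,c(S)]$, contradicting the enumeration.

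With all gaps $\geq 2$ in hand, applying Lemma~\ref{lemma:Delta_properties}(3) to each consecutive pair yields $\Delta(s_i)\leq\Delta(s_{i+1})$ for $i=0,\dots,k-1$, hence by telescoping $\Delta(s_0)\leq\Delta(s_1)\leq\cdots\leq\Delta(s_k)=\Delta(c(S))$. Since the domain of $\Delta$ is exactly $S\cap[0,c(S)]=\{s_0,\dots,s_k\}$, this proves $\Delta$ attains its maximum at $\delta=c(S)$, as claimed.

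I do not expect any genuinely hard step: once the gap bound is established, the conclusion is a one-line telescoping argument. The only points requiring a little care are (i) the final gap from $s_{k-1}$ to $c(S)$, which the definition of ``sparse'' does not address and which must be handled via $c(S)-1\notin S$, and (ii) not overlooking the degenerate case $S=\mathbb{N}$, where $c(S)=0$ and the claim holds vacuously.
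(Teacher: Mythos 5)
Your proof is correct and uses exactly the same idea as the paper: combine the definition of sparse with part (3) of Lemma~\ref{lemma:Delta_properties} to show $\Delta$ is nondecreasing on $S\cap[0,c(S)]$. The paper states this in one line; you spell out the telescoping and, usefully, the one detail the paper leaves implicit, namely that the final gap $c(S)-s_{k-1}\geq 2$ follows not from the sparseness hypothesis (which only speaks of elements below the conductor) but from the standard fact $c(S)-1\notin S$.
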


\begin{proof}
    If $S$ is sparse, the last statement of Lemma \ref{lemma:Delta_properties} implies that $\Delta$ is nondecreasing when restricted to $S$ and the result follows.
\end{proof}

The other family of semigroups we will study in this paper is the following.

\begin{definition} \label{definition:hermitian_semigroups}
    A numerical semigroup is Hermitian if $S = \langle q, q+1 \rangle$ with $q > 1$.
\end{definition}

\begin{remark} \label{remark:hermitian}
    Some observations about Hermitian semigroups:
    \begin{enumerate}
        \item Since they are generated by two elements, we have that $c(S) = q(q-1)$ and $n(S) = g(S) = c(S)/2 = q(q-1)/2$. In particular, they are symmetric.
        \item The elements of the semigroup smaller than the conductor are exactly
            \begin{equation*}
                \begin{split}
                S \cap [0, c(S) - 1] = \{i + kq \in \mathbb{N} \such k \in [0, q-2], \quad i \in [0, k]\}.
                \end{split}
            \end{equation*}
    \end{enumerate}
\end{remark}

When $q$ is a prime power, Hermitian semigroups are Weierstrass semigroups  associated to Hermitian curves \cite[Section~5.3]{pellikaan}. As with sparse semigroups, the $\Delta$ map can be easily studied over Hermitian semigroups.

\begin{proposition}
    If $S = \langle q, q+1 \rangle$ is Hermitian, then $\Delta$ reaches its maximum at $\delta = q \lceil (q - 1)/2 \rceil$.
\end{proposition}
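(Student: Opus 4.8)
The plan is to use the explicit description of a Hermitian semigroup below its conductor from Remark~\ref{remark:hermitian}: for each $k \in [0,q-2]$ the block $B_k := \{kq, kq+1, \ldots, kq+k\}$ is a run of $k+1$ consecutive integers, $S \cap [0,c(S)-1] = \bigcup_{k=0}^{q-2} B_k$, and $c(S) = q(q-1) \in S$. Inside a block, consecutive elements of $S$ differ by $1$ and are consecutive in $S$ (the next element of $S$ after $kq+k$ is $(k+1)q$ or $c(S)$, both larger than $kq+k+1$), so the third item of Lemma~\ref{lemma:Delta_properties} gives $\Delta(kq) > \Delta(kq+1) > \cdots > \Delta(kq+k)$. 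Hence on its whole domain $S \cap [0,c(S)]$ the map $\Delta$ attains its maximum either at a block start $kq$, $k \in [0,q-2]$, or at $\delta = c(S)$, where $\Delta(c(S)) = c(S)$ by Lemma~\ref{lemma:Delta_properties}. So the first step reduces the problem to maximizing a one-variable quantity over block starts (plus the endpoint).

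Next I would compute $\Delta(kq)$ in closed form. Counting the elements of $S$ lying in $[kq, c(S)-1]$ block by block gives $n(kq) = \sum_{j=k}^{q-2}(j+1) = \tfrac12\bigl(q(q-1) - k(k+1)\bigr)$, hence
\begin{equation*}
    \Delta(kq) = kq + 2n(kq) = -k^2 + (q-1)k + q(q-1),
\end{equation*}
which I abbreviate by $f(k)$. Note that the same formula with $k = q-1$ returns $q(q-1) = c(S)$, so, absorbing the endpoint $\delta = c(S)$ into "$k = q-1$", the overall maximum of $\Delta$ equals $\max_{k \in [0,q-1]} f(k)$; one also checks $f(0) = q(q-1) = 2n(S)$, consistently with the first two items of Lemma~\ref{lemma:Delta_properties}. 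As a function of a real variable, $f$ is a downward parabola with vertex at $k = (q-1)/2$, so among integers in $[0,q-1]$ it is maximized at the integer(s) nearest $(q-1)/2$. In every case $k = \lceil (q-1)/2 \rceil$ is such a maximizer (the unique one when $q$ is odd; when $q$ is even, $k = (q-2)/2$ yields the same value, but the stated point remains optimal). Therefore $\max_\delta \Delta(\delta) = f(\lceil(q-1)/2\rceil) = \Delta\bigl(q\lceil(q-1)/2\rceil\bigr)$, which is the claim.

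The argument is essentially a short computation once the block picture is in place; the two spots requiring care are the reduction to block starts—where Lemma~\ref{lemma:Delta_properties}(3) must be applied along each run of consecutive semigroup elements and the endpoint $\delta = c(S)$ handled separately (done above by extending the formula to $k = q-1$)—and the parity bookkeeping for the vertex of $f$, where one checks that $\lceil(q-1)/2\rceil \in [0,q-1]$ and that it realizes $\max f$; this last point also covers the degenerate case $q = 2$, for which $q\lceil(q-1)/2\rceil = 2 = c(S)$. Proposition~\ref{proposition:Delta_bound} could alternatively be invoked to restrict attention to $\delta \geq c(S)/2$ from the start, but with the explicit formula for $f$ in hand this shortcut is not needed.
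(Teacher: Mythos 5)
Your proof is correct and follows essentially the same approach as the paper's: reduce to block starts $kq$ via Lemma~\ref{lemma:Delta_properties}(3), compute $n(kq)$ from the explicit block description in Remark~\ref{remark:hermitian}, and maximize the resulting downward parabola $f(k) = -k^2 + (q-1)k + q(q-1)$ over integers. The only cosmetic difference is that you fold the endpoint $\delta = c(S)$ into the case $k = q-1$ of the formula and spell out the within-block monotonicity more explicitly, whereas the paper handles both more tersely.
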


\begin{proof}
    First, observe that $\delta$ must be $\delta = k q$ for some $k \in [0, q-1]$, since, if $\delta - 1 \in S$, the last statement of Lemma \ref{lemma:Delta_properties} implies that $\Delta(\delta) < \Delta(\delta - 1)$. Because of the last statement of Remark \ref{remark:hermitian}, we know the elements of the semigroup under the conductor, so we can easily give an explicit formula for $n(kq)$:
    \begin{equation*}
        \begin{split}
            n(kq) &= \sum_{i=k}^{q-2} (i + 1)\\
            &= (q - 1 - k) + \sum_{i=k}^{q-2} i\\
            &= (q - 1 - k) + k(q - 1 - k) + \sum_{i=0}^{q-2-k} i\\
            &= (q - 1 - k) + k(q - 1 - k) + \frac{(q - 2 - k) (q - 1 - k)}{2}\\
            &= (q - 1 - k)\left(\frac{q+k}{2}\right).
        \end{split}
    \end{equation*}
    The next step is to maximize the function $f(k) := \Delta(kq)$:
    \begin{equation*}
        \begin{split}
        f(k) = kq + 2 (q - 1 - k)\left(\frac{q+k}{2}\right) = -k^2 + (q - 1)k + q(q - 1),
        \end{split}
    \end{equation*}
    which reaches its maximum in $k = \lceil(q-1)/2\rceil$. We conclude that $\Delta$ reaches its maximum in $\delta = q \lceil (q - 1)/2 \rceil$.
\end{proof}

\section{Decoding and complexity} \label{section:interpolating}

We now complete the procedures from Subsections \ref{subsection:ag_polynomial} and \ref{subsection:ag_matdot} by showing how to recover $h \in \mathcal{L}(kQ)^{a \times b}$ from its evaluations in $k+1$ $ \mathbb{F}_q $-rational places $P_1, P_2, \ldots, P_{k+1} \in \mathbb{P}$ (these evaluations are the computations of the first $k$ responsive nodes), where $ k = \max(D_A+D_B) $. Observe that $ (a,b) = (r/m,t/n) $ for AG polynomial codes and $ (a,b) = (r,t) $ for AG matdot codes. More precisely, we need to recover the coordinates of $h$ with respect to the basis $\mathcal{B}$, defined in Subsection \ref{subsection:ag_polynomial} for AG polynomial codes and in Theorem \ref{theorem:basis_matdot} for AG matdot codes. This can be achieved using Lagrange interpolation as follows.

For some $(i,j) \in [1, a] \times [1,b]$, we write $h_{i,j} \in \mathcal{L}(k Q)$ for the $(i,j)$th entry of $h$. Let $ \kappa \in \mathbb{N} $ be such that $ [0,k]\cap S = \{s_0,s_1, \ldots, s_\kappa \} $ and consider the linear equations over $ \mathbb{F}_q $
\begin{equation} \label{equation:interpolation}
    \begin{cases}
        x_0 f_{s_0}(P_1) + x_1 f_{s_1}(P_1) + \cdots + x_\kappa f_{s_\kappa}(P_1) & = h_{i,j}(P_1), \\
        x_0 f_{s_0}(P_2) + x_1 f_{s_1}(P_2) + \cdots + x_\kappa f_{s_\kappa}(P_2) & = h_{i,j}(P_2), \\
                                                                                & \vdots         \\
        x_0 f_{s_0}(P_{k+1}) + x_1 f_{s_1}(P_{k+1}) + \cdots + x_\kappa f_{s_\kappa}(P_{k+1}) & = h_{i,j}(P_{k+1}),
    \end{cases}
\end{equation}
whose solutions $x_0, x_1, \ldots x_\kappa \in \mathbb{F}_q$ form the coordinates of $h_{i,j}$ with respect to the basis $\mathcal{B}$. Consider the analogous of the Vandermonde matrix:
\begin{equation*}
    G :=
    \begin{pmatrix}
        f_{s_0}(P_1)     & f_{s_0}(P_2)     & \ldots & f_{s_0}(P_{k+1}) \\
        f_{s_1}(P_1)     & f_{s_1}(P_2)     & \ldots & f_{s_1}(P_{k+1}) \\
        \vdots           & \vdots           & \ddots & \vdots       \\
        f_{s_\kappa}(P_1) & f_{s_\kappa}(P_2) & \ldots & f_{s_\kappa}(P_{k+1})
    \end{pmatrix}.
\end{equation*}
Using this matrix, system (\ref{equation:interpolation}) can be written as
\begin{equation} \label{equation:decoding_coordinate}
    (x_0, x_1, \ldots, x_\kappa) G = (h_{i,j}(P_1), h_{i,j}(P_2), \ldots, h_{i,j}(P_{k+1})).
\end{equation}
Observe that $G$ is the matrix associated to $ev$, the evaluation map from Section \ref{section:preliminaries}. Since $ev$ is injective, $G$ has a right inverse $ G^{-1} $, which can be computed with complexity $\mathcal{O}(k^3)$ using Gaussian elimination. Hence $x_l$ can be obtained as
\begin{equation*}
    x_l = (h_{i,j}(P_1), h_{i,j}(P_2), \ldots, h_{i,j}(P_{k+1})) \, \text{col}_l(G^{-1})
\end{equation*}
with complexity $\mathcal{O}(k)$. Observe that we only need to compute $G^{-1}$ once since this matrix is the same for every index $(i,j)$.

In AG polynomial codes (Section \ref{section:polynomial}), all $\kappa+1$ coordinates of $h$ have to be recovered. So we have to perform $ \frac{rt}{mn} $ componentwise interpolations, each one having a total cost of $\mathcal{O}(k^2)$ plus the complexity of inverting $G$. Then, the total cost of decoding AG polynomial codes is $\mathcal{O}(\frac{rt}{mn} k^2 + k^3)$. In AG matdot codes (Section \ref{section:matdot}), only the coordinate corresponding to $f_d$ has to be obtained. That is, for each $(i,j)$, we have to obtain just the value of $x_d$. So we need to perform $rt$ vector-matrix multiplications of cost $\mathcal{O}(k)$ and compute $G^{-1}$, resulting in a total cost of $\mathcal{O}(rtk + k^3)$. 

Another aspect to keep in mind is the complexity of the operations carried out by each worker node. In AG polynomial codes, the $i$th node has to multiply $p_A(P_i) \in \mathbb{F}_q^{\frac{r}{m} \times s}$ by $p_B(P_i) \in \mathbb{F}_q^{s \times \frac{t}{n}}$. By using the naive multiplication algorithm, this has a cost of $\mathcal{O}(\frac{r s t}{mn})$. In AG matdot codes, the multiplication is $p_A(P_i) \in \mathbb{F}_q^{r \times \frac{s}{m}}$ by $p_B(P_i) \in \mathbb{F}_q^{\frac{s}{m} \times t}$, which has a cost of $\mathcal{O}(\frac{rst}{m})$. Table \ref{table:complexity} summarizes the complexity of AG polynomial codes and AG matdot codes.

\begin{table}[h]
    \centering
    \begin{tabular}{l|c|c}
                            & Worker computation              & Decoding computation \\ \hline
        AG polynomial codes & $\mathcal{O}(\frac{r s t}{mn})$ & $\mathcal{O}(\frac{rt}{mn} k^2 +k^3)$ \\ \hline
        AG matdot codes     & $\mathcal{O}(\frac{r s t}{m})$  & $\mathcal{O}(rtk+k^3)$   \\ \hline
    \end{tabular}
    \caption{Complexity of AG polynomial and matdot codes.}
    \label{table:complexity}
\end{table}

Observe that the decoding complexity is negligible compared to the computation of each worker if $ k^2 = o(\frac{\min\{r,s,t\}}{mn}) $ for AG polynomial codes, and if $ k = o(\frac{\min\{r,s,t\}}{m}) $ for AG matdot codes. Both hypotheses are clearly satisfied in practical scenarios.
%

\section{Asymptotic analysis of the recovery threshold} \label{section:asymptotic}

We conclude by studying the asymptotic performance of the recovery threshold of the AG polynomial codes and the AG matdot codes from Subsections \ref{subsection:ag_polynomial} and \ref{subsection:ag_matdot}. As done in \cite{matdot_codes}, we consider the same storage constraint ($1/m$), that is, the same value of $ m = n $ for AG polynomial codes and AG matdot codes in order to compare them.

We observe that the computational complexity per worker and the decoding complexity, studied in Section \ref{section:interpolating}, behave exactly as in classical polynomial codes \cite{polynomial_codes} and classical matdot codes \cite{matdot_codes}. As in the classical case, AG polynomial codes outperform AG matdot codes in the computational complexity per worker (see Table \ref{table:complexity}). On the other hand, AG matdot codes outperform AG polynomial codes in terms of the recovery threshold, which is around $ m^2 + c(S) $ (quadratic) for the former and $ \leq 2(m+c(S)) $ (linear) for the latter, in our constructions. However, this is if we consider the semigroup constant, which implies that the algebraic function field is kept constant. In particular, the number of workers can only grow until the number of $ \mathbb{F}_q $-rational places of the given algebraic function field. Note that if we change the function field, then $ c(S) $ also changes, and so does the recovery thresholds.
 
We now provide an asymptotic study of the recovery thresholds of both AG polynomial codes and AG matdot codes, when considering sequences of algebraic function fields, and show that they perform almost optimally in the asymptotic regime, for a fixed moderate $ q $. Define the recovery threshold ratio as

%
%
$$ \rho := \frac{\text{recovery threshold}}{N}. $$ 

In the case of AG polynomial codes, the recovery threshold has to be at least $m^2$, hence $ \rho \geq \frac{m^2}{N} $. We obtain an optimal scheme when $ \rho = \frac{m^2}{N} $ (remember that we want the recovery threshold to be as small as possible). Observe that this holds for classical polynomial codes. In the case of AG polynomial codes, the recovery threshold in Constructions \ref{construction:AG_polynomial_Apery} and \ref{construction:AG_polynomial_lemma} is typically $ m^2 + c(S) $. Choose an optimal tower of algebraic function fields, that is, a sequence such that the $i$th function field $F_i / \mathbb{F}_q$ has genus $g(S_i)$ and number of $ \mathbb{F}_q $-rational points $N_i$, where
$$\lim_{i \to \infty} \frac{g(S_i)}{N_i} = \frac{1}{\sqrt{q}-1}, $$ 
being $ q $ a square. Such families exist, for example Garc{\'i}a-Stichtenoth's second tower of function fields \cite{garcia1996asymptotic} (see also \cite[Section~2.9]{pellikaan} and \cite{garcia_stichtenoth_semigroup}), which further satisfies $ \lim_{i \to \infty} g(S_i)/$ $c(S_i) = 1 $, thus
\begin{equation*}
    \lim_{i \to \infty} \left(\rho - \frac{m^2}{N_i}\right) = \lim_{i \to \infty} \frac{c(S_i) + m^2 - m^2}{N_i} = \lim_{i \to \infty} \frac{c(S_i)}{N_i} = \frac{1}{\sqrt{q} - 1}.
\end{equation*}
Hence, we observe that, asymptotically, the difference between the ratios $ \rho - \frac{m^2}{N_i} $ tends to a value $ \varepsilon(q) = (\sqrt{q}-1)^{-1} > 0 $ that is small even for moderate values of $ q $, as was the case of algebraic-geometry codes in classical Coding Theory \cite{pellikaan,stichtenoth}.

The case of AG matdot codes is analogous, taking into account that the recovery threshold must be at least $ 2m-1 $, being classical matdot codes optimal. Using the construction from Theorem \ref{theorem:AG_matdot_optimal} applied to Garc{\'i}a-Stichtenoth's second tower, we obtain the recovery threshold
$$ 2(d-\delta)+1 = 2(m+c(S)) - 1, $$
where $ d = m-1+2c(S)-2n(\delta) $, and $ \delta = c(S) $ (thus $ n(\delta) = 0 $) by Proposition \ref{proposition:Delta_sparse}, since the corresponding semigroups are sparse. Once again, since this tower is optimal and satisfies $ \lim_{i \to \infty} g(S_i)/c(S_i) = 1 $, we have
\begin{equation*}
    \lim_{i \to \infty} \left(\rho - \frac{2m-1}{N_i}\right) = \lim_{i \to \infty} \frac{2(m+c(S_i))-1 - (2m-1)}{N_i} = \lim_{i \to \infty} \frac{2c(S_i)}{N_i} = \frac{2}{\sqrt{q} - 1},
\end{equation*}
asymptotically achieving an almost optimal ratio $ \rho $ even for moderate values of $ q $ (which can be kept fixed as $ i \to \infty $).

\section*{Open problems}
There are two immediate topics to explore related to the contents of this paper:
\begin{itemize}
    \item Improving the constructions of Subsection \ref{subsection:ag_polynomial} and finding optimal solutions to the polynomial code problem in general or when restricting to some families of numerical semigroups.
    \item Generalizing the construction of polydot codes given in \cite{matdot_codes} in the same way as AG polynomial codes and AG matdot codes. Defining the analogous of Definition \ref{definition:AG_poly_codes_solution} and Definition \ref{definition:AG_matdot_codes} is not difficult but the process of tweaking the basis of $\mathcal{L} (\infty Q)$ does not seem trivial.
\end{itemize}


\nocite{*}
\printbibliography

\end{document}